\newtheorem{myprob}[corollary]{Problem}
\title{Optimal Continuous Time Markov Decisions}
\author{Yuliya Butkova, Hassan Hatefi, Holger Hermanns, Jan Kr\v{c}\'{a}l}
\institute{Saarland University -- Computer Science, Saarbr\"{u}cken, Germany}
\begin{document}

\maketitle

\begin{abstract}
  In the context of Markov decision processes running in continuous
  time, one of the most intriguing challenges is the efficient
  approximation of finite horizon reachability objectives. A multitude
  of sophisticated model checking algorithms have been proposed for
  this.
However, no proper benchmarking has
been performed thus far.

This paper presents a novel and yet simple solution: an algorithm,
originally developed for a restricted subclass of models and a
subclass of schedulers, can be twisted so as to become competitive with
the more sophisticated algorithms in full generality. As the second
main contribution, we perform a comparative evaluation of the core
algorithmic concepts on an extensive set of benchmarks varying over
all key parameters: model size, amount of non-determinism, time
horizon, and precision.




\end{abstract}


\section{Introduction}
\label{sec:intro}

Over the last two decades, a formal approach to quantitative
performance and dependability evaluation of concurrent systems has
gained maturity. At its root are continuous-time Markov chains (CTMC)
for which efficient and quantifiably precise solution methods
exist~\cite{BaierHHK03}. A CTMC can be viewed as a labelled transition
system (LTS) whose transitions are delayed according to exponential
distributions. CTMCs are stochastic processes and thus do not support
non-determinism. Non-determinism, often present in classical
concurrency and automata theory models, is useful for modelling
uncertainty or for performing optimisation over multiple 
choices. The genuine extension of CTMCs with non-determinism are
continuous time Markov decision processes (CTMDPs). The non-determinism is controlled by an object called \emph{scheduler} (also policy or strategy).

Prominent applications of CTMDPs include power management and
scheduling~\cite{QiuQP01}, networked, distributed
systems~\cite{GhemawatGL03,HaverkortHK00}, epidemic and population
processes~\cite{Lefevre81}, economy~\cite{BrunoDF81} and others.
Moreover, CTMDPs are the core semantic
model~\cite{DBLP:conf/apn/EisentrautHK013} underlying formalisms such
as generalised stochastic Petri
nets~\cite{Marsan:1994:MGS:561155}, Markovian
stochastic activity networks~\cite{DBLP:conf/pnpm/MeyerMS85} and
interactive Markov chains~\cite{DBLP:conf/fmco/HermannsK09}.

When model checking a CTMDP~\cite{DBLP:conf/cav/BuchholzHHZ11}, one
asks whether the behaviour of the model for \emph{some} schedulers (if we control the non-determinism) or for \emph{all} schedulers (if it is out of control) 
satisfies given performance or dependability criteria.
A large
variety of them can be expressed using logics such as
CSL~\cite{AzizSSB96}.
At the centre of 
model-checking problems for such criteria
is time bounded reachability: \emph{What is the maximal/minimal
	probability to reach a given set of states within a given time
	bound}. 
      Having an efficient approach for this optimisation (maximisation
      or minimisation) is crucial for successful large-scale
      applications. 

      In order to not discriminate against real situations, one
      usually assumes that the scheduler can base its decisions
      on \emph{any} available information about the past.
Restricting the information however tends to imply cheaper approximative algorithms~\cite{DBLP:conf/fossacs/NeuhausserSK09,DBLP:journals/tcs/BaierHKH05}.
For CTMDPs, we can distinguish (general) \emph{timed} optimal scheduling and (restricted) \emph{untimed} optimal scheduling~\cite{DBLP:journals/tcs/BaierHKH05,DBLP:journals/iandc/BrazdilFKKK13}. In the latter case, the scheduler has no possibility, intuitively speaking, to look at a clock measuring time.
Another distinction within timed optimality discussed in the literature is \emph{early} optimal scheduling (where every decision is frozen in between state changes~\cite{DBLP:conf/qest/NeuhausserZ10,DBLP:conf/fsen/HatefiH13}) and \emph{late} optimal scheduling (where every decision can change as time passes while residing in a state~\cite{DBLP:journals/cor/BuchholzS11,fearnley_et_al:LIPIcs:2011:3354}).
%
%

%

A handful of sophisticated algorithms have been suggested 
for timed optimality (partly for early optimality, partly for late optimality)
signifying both the importance and the difficulty of this
problem~\cite{DBLP:conf/qest/NeuhausserZ10,DBLP:journals/cor/BuchholzS11,fearnley_et_al:LIPIcs:2011:3354}. This
paper presents a substantially different algorithm addressing this
very problem. The approach is readily applicable to both early and
late optimality. It harvest a very efficient algorithm for untimed
optimality~\cite{DBLP:journals/tcs/BaierHKH05} originally restricted
to a subclass of models.  
By a simple twist, we make it applicable for the general timed optimality for arbitrary models.
As a second contribution, we present an
exhaustive empirical comparison of this novel algorithm with all other
published algorithms for the (early or late) timed optimality problem.
We do so on an extensive 
collection of scalable industrial and academic CTMDP benchmarks (that
we also make available). Notably, all earlier evaluations did compare
at most two algorithms on at most one or two principal cases.  We
instead cross-compare $5$ algorithms on $7$ application cases,
yielding a total of about 2350 distinct configurations.  The results
demonstrate that our simple algorithm is highly efficient across the
entire spectrum of models, except for some of the experiments where
extreme precision is required. On the other hand, no algorithm is
consistently dominating any other algorithm across the experiments
performed.



\paragraph{Related work.} 
Timed optimal scheduling has been considered for many decades both theoretically~\cite{Miller68,DBLP:journals/tcs/RabeS13} and practically by introducing approximative algorithms. Formal error bounds needed for verification have been studied only recently~\cite{DBLP:conf/qest/NeuhausserZ10,DBLP:conf/fsen/HatefiH13,fearnley_et_al:LIPIcs:2011:3354,DBLP:journals/cor/BuchholzS11}. 
Fragmentary empirical evaluations of some of the published
algorithms have been
performed~\cite{DBLP:conf/cav/BuchholzHHZ11,fearnley_et_al:LIPIcs:2011:3354,DBLP:conf/fsen/HatefiH13}. In a nutshell, the published knowledge boils down to 
\cite{DBLP:conf/qest/NeuhausserZ10}
$\lessdot_{_{\text{\cite{DBLP:conf/fsen/HatefiH13}}}}$
\cite{DBLP:conf/fsen/HatefiH13}
and
\cite{DBLP:conf/qest/NeuhausserZ10}
$\lessdot_{_{\text{\cite{DBLP:conf/cav/BuchholzHHZ11}}}}$
\cite{DBLP:journals/cor/BuchholzS11}
$\lessdot_{_{\text{\cite{fearnley_et_al:LIPIcs:2011:3354}}}}$
\cite{fearnley_et_al:LIPIcs:2011:3354},
%
%
where  $a \lessdot_{_{[\cdot]}} b$ denotes ``$b$ is shown empirically faster than $a$ in~$[\cdot]$''.
A substantial cross-comparison of the newest three algorithms~\cite{DBLP:journals/cor/BuchholzS11,fearnley_et_al:LIPIcs:2011:3354,DBLP:conf/fsen/HatefiH13} is however lacking.

\paragraph{Contribution of the paper.} 
The 
paper $(i)$ develops a novel and simple approximation method for time
bounded CTMDP reachability, $(ii)$ presents the first ever set of benchmarks
for CTMDP model checking, and $(iii)$ performs
an empirical evaluation across benchmarks and algorithms. The evaluation suggests  
that the optimal timing of decisions
for time bounded reachability can be solved effectively by a
rather straightforward algorithm, unless extreme precision is needed.
%


\section{Preliminaries}
\label{sec:ctmcp}
\begin{definition}
  A \emph{continuous-time Mar\-kov decision process} (CTMDP)\label{sym:C2} is a tuple $\C=(S,\Act,\bfR)$ where $S$ is a finite
  set of \emph{states}, $\Act$\label{sym:Act} is a finite set of \emph{actions}, 
 and $\bfR\label{sym:bfR2}: S \times \Act \times S \to
  \mathbb{R}_{\geq 0}$ is a rate function.
\end{definition}

We call an action $\acta$ \emph{enabled} in $s$, also denoted by $a \in \Act(s)$, if $\bfR(s,\acta,s') > 0$ for some $s'\in S$.
We require that all sets $\Act(s)$ are non-empty.
A continuous-time Markov chain (CTMC) is a CTMDP where all $\Act(s)$ are
singleton sets. 

For a given state $s$ and action $\acta \in\Act(s)$, we denote by $\exit(s,\acta) = \sum_{s'} \bfR(s,\acta,s')$ the \emph{exit} rate of $a$ in $s$. Finally, we let $\bfP(s,\acta,s') := \bfR(s,\acta,s') / \exit(s,a)$.




The operational behaviour of a CTMDP is 
like in a CTMC.
Namely, when performing a given action $\acta_0$ in a state $s_0$, the CTMDP waits for a transition, i.e. 
 waits for a delay $t_0$ chosen randomly according to an exponential distribution with rate $\exit(s_0,\acta_0)$. 
 The 
transition leads to a state $s_1$ again chosen randomly according to the probability distribution $\bfP(s_0,\acta_0,\cdot)$. When performing an action $a_1$ there, it similarly waits for time $t_1$ and makes a transition into a state $s_2$ and so on, 
forming an infinite \emph{run} $s_0 t_0 s_1 t_1 \cdots$. 

The difference to a CTMC lies in the  need to choose actions to perform, done by a \emph{scheduler}. 
%
%
There are two classes of schedulers, \emph{early} and \emph{late}.  Whenever entering a state, an early scheduler needs to choose and commit to a next action, whereas late schedulers may change such choices at any time later while residing in the state. 
In this paper we restrict w.l.o.g.~\cite{DBLP:phd/de/Neuhausser2010} to deterministic schedulers but we allow the decision to depend on the whole \emph{history} $s_0 t_0 \cdots t_{n-1} s_n$ 
so far.

\begin{definition}
A (timed late) randomised \emph{scheduler} is a measurable\footnote{Measurable with respect to the standard $\sigma$-algebra on the set of finite histories~\cite{DBLP:phd/de/Neuhausser2010}.} function $\sigma$ that to any history $h = s_0 t_0 \cdots t_{n-1} s_n$ and time $t \geq 0$ spent in $s_n$ so far assigns a distribution over enabled actions $\Act(s_n)$.
We call $\sigma$ \emph{early} if
$\sigma(h,t) = \sigma(h,t')$ for all $h, t,t'$;
%
and 
\emph{deterministic} if 
$\sigma(h,t)$ assign $1$ to some action $a$ for all $h,t$.
%
%
\end{definition}

We denote the set of all (timed) late or early schedulers by $\GM_\late$ and $\GM_\early$, respectively. We use these subscripts $\x \in \{\late,\early\}$ throughout the paper to distinguish between the late and the early setting. 
Furthermore, a scheduler $\sigma$ is called \emph{untimed} if $\sigma(h,t) = \sigma(h',t')$ whenever $h$ and $h'$ contain the same sequence of states. By $\TA$ we denote the set of all untimed schedulers. Note that $\TA \subseteq \GM_\early \subseteq \GM_\late$.

Fixing a scheduler $\sigma$ and an initial state $s$ in a CTMDP $\C$, we obtain the unique probability measure $\Prb[\C,s]{\sigma}$ over the space of all runs by standard definitions~\cite{DBLP:phd/de/Neuhausser2010}, denoted also by $\Prb[s]{\sigma}$ when $\C$ is clear from context.

\begin{myprob}[Maximum Time-Bounded Reachability]\label{prob:approx}
Let $\C = (S,\Act,\bfR)$, $G \subseteq S$ be a set of goal states, $T \in\Rsetpo$ a time bound, and $\x \in \{\late,\early\}$. 
Approximate the \emph{values} $\val{}_{\C} \in [0,1]^S$, where each $\val{}_{\C}(s)$ maximises the probability 
$$\val{}_{\C}(s) := \sup_{\sigma \in \GM_\x} \Pr[s]{\sigma}{\reach{T}{G}}$$
of runs $\reach{T}{G} = \{s_0 t_0 \cdots \mid \exists i:  s_i \in G \land \sum_{j=0}^{i-1} t_j \leq T\}$ reaching $G$ before $T$.
\end{myprob}

%
Whenever $\C$ is clear from 
context, we write $\val{}$.
%
We call $\sigma \in \GM_\x$ \emph{$\epsilon$-optimal} if $\Pr[s]{\sigma}{\reach{T}{G}} \geq \val{}(s) - \eps$ for all $s\in S$, and \emph{optimal} if it is $0$-optimal.
%
%

By minor changes, all results of the paper also address the dual problem of \emph{minimum} time bounded reachability that we omit to simplify the presentation.

\begin{remark} \label{rem:transf}
  There exists a value preserving encoding of early scheduling into
  late scheduling in CTMDPs~\cite{DBLP:journals/acta/RabeS11}. It has
  exponential space complexity (due to the number of induced
  transitions). This exponentiality does arise in practice, e.g.  for the stochastic job scheduling problem considered
  later. Therefore we  treat the two algorithmic
  settings separately. Early scheduling is  natural for models derived
  from generalised stochastic Petri nets or interactive Markov chains.
\end{remark}

\subsubsection*{}

\vspace*{-10mm}

\section{\GU: Optimal Time-Bounded Reachability Revisited}\label{sec:our}

\begin{algorithm}[b!]
	\SetAlgoLined
	\DontPrintSemicolon
	
	\SetKwInOut{Input}{input}
	\SetKwInOut{Output}{output}
\SetKwInOut{Par}{params}
	
	\Input{CTMDP $\C = (S,\Act,\bfR)$, goal states $G\subseteq S$, horizon $T \in \Rsetp$, scheduler class $\x \in \{ \late, \early \}$, and approximation error $\eps>0$}
	\Par{truncation error ratio $\kappa \in (0,1)$}
	\Output{vector $\valVar{}$ such that $\lVert \valVar{} - \val{} \rVert_\infty \leq \eps$ and $\lambda$}
	\BlankLine
	
	$\lambda \leftarrow$ maximal exit rate $\lambdamin$ in $\C$ \;
	\BlankLine
	
	\Repeat{\textnormal{${\lVert \ovvalVar{} - \unvalVar{} \rVert_\infty} \leq \eps \cdot (1-\kappa)$}}{
		$\C^\x_\lambda \leftarrow$ $\x$-uniformisation of $\C$  to the rate $\lambda$\;
		$\unvalVar{} \leftarrow $ approximation of the lower bound $\unval{}$ for $\C^\x_\lambda$ up to error $ \eps \cdot \kappa$\; 		$\ovvalVar{} \leftarrow $ approximation of the upper bound $\ovval{}$ for $\C^\x_\lambda$ up to error $\eps \cdot \kappa$\; 		
		$\lambda \leftarrow 2 \cdot \lambda$\;
	}
	\Return $\unvalVar{}, \lambda$
	
	\caption{\GU}
	\label{alg:unif}
\end{algorithm}

In this section, we develop a novel and simple algorithm for Problem~\ref{prob:approx}. 
We fix $\C = (S,\Act,\bfR)$, $G \subseteq S$, $T \in\Rsetpo$, $\x \in \{\late,\early\}$ and an approximation error $\eps >0$.
Furthermore, let $\lambdamin := \max_{s,a} \exit(s,\acta)$ denote the maximal exit rate in $\C$.

In contrast to existing methods, our approach does not involve discretisation.
The algorithm instead builds upon \emph{uniformisation}~\cite{jensen1953markoff} and \emph{untimed} analysis~\cite{DBLP:journals/tcs/BaierHKH05,DBLP:journals/iandc/BrazdilFKKK13,DBLP:journals/tcs/RabeS13}. It is outlined in Algorithm~\ref{alg:unif}.
Technically, it is based on an iterative computation of tighter and tighter lower and upper bounds on the values until the required precision is met. In the first iteration, a \emph{uniformisation rate} $\lambda$ is set to $\lambdamin$, in every further iteration its value is doubled. In every iteration, we compute a lower bound $\unval{}$ and an upper bound $\ovval{}$ by two types of untimed analyses on the CTMDP $\C^\x_\lambda$ obtained by uniformising $\C$ to the rate $\lambda$.
In the remainder of this section, we explain the individual steps of Algorithm~\ref{alg:unif}, and prove correctness and termination.

Informally, the lower bound is based on maximum time bounded reachability with respect to the untimed scheduler subclass~\cite{DBLP:journals/tcs/BaierHKH05}. The upper bound, similarly to the one in~\cite{DBLP:journals/cor/BuchholzS11}, is based on \emph{prophetic} untimed schedulers that yield higher value than timed schedulers by knowing in advance how many steps will be taken within time $T$.
The intuition is that an untimed scheduler can approximately observe
the elapse of time by knowing the count of steps taken and the
expected delay per every step. In uniformised models, these delay
expectations are identical across all states (forming a Poisson
process) and therefore allow easy access to the expected total elapsed
time. By uniformising the model with higher and higher uniformisation
rates, this implicit knowledge of untimed schedulers increases. On the
other hand, the knowledge of prophetic untimed schedulers decreases;
both approaching the power of timed schedulers.


\subsection{Uniformisation to $\C_\lambda^\x$}

CTMDP $\C$ may have transitions with very different rates across
different states and actions. Here, we discuss how to 
perform \emph{uniformisation} for such a model. This is a conceptually well-known idea~\cite{jensen1953markoff}.
%
Applying it to $\C$ intuitively makes transitions occur with a
higher rate $\lambda \geq \lambdamin$, uniformly across all states and
actions.

To ensure that uniformisation does not change the schedulable
behaviour, we need distinct uniformisation procedures for the early
and the late setting.  Late uniformisation is straightforward, it
adds self-loops to states and actions where needed.
\begin{definition}[Late uniformisation]
	For $\lambda \geq \lambdamin$ we define the \emph{late uniformisation of $\C$ to rate $\lambda$} as a CTMDP $\C_\lambda^\late = (S,\Act,\bfR_\lambda^\late)$ where 
	$$\bfR_\lambda^\late(s,\acta,s') := \begin{cases}
	\bfR(s,\acta,s') & \text{if $s \neq s'$,}\\
	\lambda - 
	\sum \limits _{s''\neq s} \bfR(s,\acta,s'') 
& \text{if $s = s'$.}
	\end{cases}$$
\end{definition}

\begin{example} For the fragmentary CTMDP $\C$ depicted below on the left,
   its late uniformisation to rate $4.5$ is depicted in the middle.
%
%
\begin{center}
	\begin{tikzpicture}[->,>=stealth',shorten >=1pt,auto,node distance=2cm,semithick, scale=0.6, transform shape]	  
	
	\tikzstyle{state}=[circle, fill=white,text=black, draw=black, minimum size=1.3cm, font=\Large, transform shape]
	\tikzstyle{markovian}=[draw=none,fill=none, inner sep=0, outer sep=0]

  \node[state]	(s0) {$s_0,\bot$};
  \node[font=\Large] at($(s0)+(0.5,2cm)$) {$\C^\early_{4.5}$};
  \node[markovian] 	(s0_a)	at($(s0)+(1.3cm,0)$)	{};
  \node[markovian] 	(s0_b)	at($(s0)-(0,1.3cm)$)	{};
  \node[markovian] 	(dots)	at($(s0_b)-(0,0.2cm)$)	{$\vdots$};
  \node[state]		(s1)	at($(s0_a)+(4cm,1.5cm)$) {$s_1,\bot$};
  \node[state]		(s2)	at($(s0_a)+(4cm,-1.5cm)$) {$s_2,\bot$};
  \node[state, draw=violet, text=violet]		(s0a_cp)	at($(s0_a)+(2cm, 0)$) {$s_0,a$};
  \node[markovian] 	(s0a_cpa)	at($(s0a_cp)+(1.4cm,0)$)	{};

		  \path (s0)
		  			edge [-,midway, above, dashed] node {$a$} 		(s0_a)
		  			edge [-,midway, left, dashed] node {$b$} 		(s0_b)
  		(s0_a)		edge [midway, above, sloped, bend left=30] node {1} 	(s1)
  					edge [midway, below, sloped, bend right=30] node {2} 	(s2)
  					edge [midway, above, violet, thick] node {1.5}	(s0a_cp)
		(s0a_cp)	edge [-,midway, below, dashed, violet, thick] node {$a$}			(s0a_cpa)
  		(s0a_cpa)	edge [midway, below, sloped, violet, thick, bend right=20] node {1}	(s1)
  					edge [midway, above, sloped, violet, thick, bend left=20] node {2} 	(s2)
  					edge [midway, above, sloped, bend right=90, violet, thick] node {1.5}	(s0a_cp)
  ;

  \node[state]	(s20) at($(s0)+(-13cm,0)$) {$s_0$};
  \node[font=\Large] at($(s20)+(0.5,2cm)$) {$\C$};
  \node[markovian] 	(s20_a)	at($(s20)+(2cm,0)$)	{};
  \node[markovian] 	(s20_b)	at($(s20)-(0,1.3cm)$)	{};
  \node[markovian] 	(dots2)	at($(s20_b)-(0,0.2cm)$)	{$\vdots$};
  \node[state]		(s21)	at($(s20_a)+(2cm,1.5cm)$) {$s_1$};
  \node[state]		(s22)	at($(s20_a)+(2cm,-1.5cm)$) {$s_2$};

  \path (s20)		
  				edge [-,midway, above, dashed, sloped] node {$a$} 		(s20_a)
				edge [-,midway, left, dashed] node {$b$} 		(s20_b)
  		(s20_a)	edge [midway, above, sloped, bend left=30] node {$1$}	(s21)
  				edge [midway, below, sloped, bend right=30] node {$2$} 	(s22)
  ;

  \node[state]	(s10) at($(s0)+(-6.5cm,0)$) {$s_0$};
  \node[font=\Large] at($(s10)+(0.5,2cm)$) {$\C^\late_{4.5}$};
  \node[markovian] 	(s10_a)	at($(s10)+(2cm,0)$)	{};
  \node[markovian] 	(s10_b)	at($(s10)-(0,1.3cm)$)	{};
  \node[markovian] 	(dots1)	at($(s10_b)-(0,0.2cm)$)	{$\vdots$};
  \node[state]		(s11)	at($(s10_a)+(2cm,1.5cm)$) {$s_1$};
  \node[state]		(s12)	at($(s10_a)+(2cm,-1.5cm)$) {$s_2$};

  \path (s10)		edge [-,midway, above, dashed, sloped] node {$a$} 		(s10_a)
		  			edge [-,midway, left, dashed] node {$b$} 		(s10_b)
  		(s10_a)		edge [midway, above, sloped, bend left=30] node {$1$} (s11)
  					edge [midway, below, sloped, bend right=30] node {$2$} (s12)
  					edge [midway, above, sloped, bend right=90, violet, thick] node {1.5}	(s10)
   ;
\end{tikzpicture}
\end{center}
  the early setting would give the scheduler the spurious possibility
  to ``reconsider'' the choice of the action in a state whenever a
  newly added self-loop is taken. To exclude that possibility, early
  uniformisation introduces a copy state $(s,a)$ for each state $s$
  and action $a$ so as to ``freeze'' the commitment of choosing action
  $a$ until the next state change occurs.  The construction is shown
  on the right. States of the form $(s,\bot)$ correspond to the
  original states, i.e.~those where no action has been committed to
  yet.
\end{example}

\begin{definition}[Early uniformisation]
For $\lambda \geq \lambdamin$, the \emph{early uniformisation of $\C$ to rate $\lambda$} is a CTMDP $\C_\lambda^\early = (S \times (\{\bot\} \cup \Act) ,\Act,\bfR_\lambda^\early)$ where for every state $(s,\cdot)$, action $a\in \Act$, and every successor state $(s',\circ)$ we have
	$$\bfR_\lambda^\early((s,\cdot),\acta,(s',\circ)) := \begin{cases}
	\bfR(s,\acta,s') & \text{if $\circ=\bot$,}\\
	\lambda - \exit(s,\acta)
 & \text{if $\circ=a,s=s'$,}\\
	0 & \text{elsewhere}.
	\end{cases}$$
\end{definition}
Uniformisation preserves the value of time-bounded reachability for both early~\cite{DBLP:phd/de/Neuhausser2010} and late schedulers~\cite{Miller68}. 
\begin{lemma}\label{lemma:unif-preserves}
$\forall \lambda \geq \lambdamin. \ \val{}_\C = \val{}_{\C_\lambda^\x}$, i.e.~uniformisation preserves the value.
\end{lemma}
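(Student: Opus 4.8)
The plan is to prove the two inequalities $\val{}_\C \le \val{}_{\C_\lambda^\x}$ and $\val{}_{\C_\lambda^\x} \le \val{}_\C$ separately, in each case by turning a scheduler on one of the two CTMDPs into a scheduler on the other that induces the very same probability of the event $\reach{T}{G}$, and then taking suprema. The bridge between the two run spaces is a measurable \emph{collapsing map} $\pi$: in the late case $\pi$ deletes every self-loop transition added by uniformisation and folds its sojourn time into the following transition; in the early case $\pi$ additionally contracts each maximal block $(s,a)\, t_i\, (s,a)\cdots (s,a)$ of copy states (which all carry the same $a$, since the copy states freeze the committed action) into the single original state $s$ with the accumulated sojourn time; finally $\pi$ relabels $(s,\bot)$ and $(s,a)$ by $s$. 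A direct inspection shows that $\pi$ preserves membership in $\reach{T}{G}$: it changes neither the sequence of visited original states nor the cumulative time at which each of them is first entered, while a state of $\C_\lambda^\x$ counts as a goal exactly when its original-state component lies in $G$.

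For $\val{}_\C \le \val{}_{\C_\lambda^\x}$ I would take any $\sigma \in \GM_\x$ for $\C$ and let it ``ride along'' on $\C_\lambda^\x$: the scheduler $\sigma'$ plays, on a history $h'$ with elapsed time $t$, whatever $\sigma$ plays on the collapsed history $\pi(h')$ with the sojourn times accumulated up to the current point. The crux is then that the $\pi$-images of $\Prb[\C_\lambda^\x,s]{\sigma'}$ and $\Prb[\C,s]{\sigma}$ coincide, which is precisely the splitting property of the exponential law underpinning classical uniformisation~\cite{jensen1953markoff}: splitting an $\mathrm{Exp}(\lambda)$ sojourn under an action $a$ into a genuine jump with probability $\exit(s,a)/\lambda$, distributed as $\bfP(s,a,\cdot)$, and a self-loop otherwise, reproduces --- once the resulting geometric run of self-loops is contracted --- an $\mathrm{Exp}(\exit(s,a))$ sojourn followed by an independent $\bfP(s,a,\cdot)$-jump, i.e.~the dynamics of $\C$. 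Since $\reach{T}{G}$ factors through $\pi$, the two schedulers reach $G$ within $T$ with equal probability, and taking the supremum over $\sigma$ gives the inequality.

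For the converse $\val{}_{\C_\lambda^\x} \le \val{}_\C$, which is the main obstacle, the difficulty is that a scheduler $\sigma'$ for $\C_\lambda^\x$ sees strictly more than any scheduler for $\C$, namely how many of the artificial self-loops have already fired --- and in the early case the copy states at least stop it from \emph{also} ``reconsidering'' its commitment, so this self-loop count is its only surplus. One must argue that this information is worthless for $\reach{T}{G}$. The key point is again Poisson splitting: conditioned on the collapsed history and on the total time already spent in the current original state, the number and timings of the interleaved self-loops follow an explicit, \emph{scheduler-independent} law and are independent of everything $\C$ does afterwards; hence averaging the decisions of $\sigma'$ against this conditional law yields a randomised scheduler $\sigma$ for $\C$ --- admissible, since randomisation is without loss of generality here --- which reaches $G$ within $T$ with probability at least that of $\sigma'$ in $\C_\lambda^\x$. (Equivalently, one may invoke that an optimal scheduler for $\C_\lambda^\x$ can be taken Markov in the pair of current state and elapsed time, and that such schedulers translate verbatim into schedulers for $\C$ of the same value by the first part.) Combining the two inequalities gives $\val{}_\C = \val{}_{\C_\lambda^\x}$. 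What remains is routine bookkeeping --- measurability of $\pi$ and of the averaged scheduler with respect to the history $\sigma$-algebra, and compatibility of the early block contraction with the early-scheduler semantics --- so in the write-up I would state the lemma with the references~\cite{Miller68} for the late and~\cite{DBLP:phd/de/Neuhausser2010} for the early case, and include only the coupling sketched above.
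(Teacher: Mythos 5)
The paper does not actually prove this lemma: it is stated with pointers to the literature (\cite{DBLP:phd/de/Neuhausser2010} for the early case, \cite{Miller68} for the late case) and no proof appears in the appendix. Your coupling sketch is therefore strictly more than the paper offers, and it is a faithful reconstruction of the standard argument behind those citations: the collapsing map $\pi$, the lifting direction via exponential splitting, and the reduction of the converse direction to the claim that the surplus information (the count of fired self-loops, resp.\ copy-state visits) is useless. The easy direction and the observation that $\reach{T}{G}$ factors through $\pi$ are fine.

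One step is mis-stated, though it is repairable. You claim that, conditioned on the collapsed history and the time spent in the current state, the interleaved self-loops follow an \emph{explicit, scheduler-independent} law. That is false in general: in $\C^\late_\lambda$ a self-loop fires at rate $\lambda - \exit(s,a)$ for the action $a$ currently selected, and $\sigma'$ may select $a$ as a function of the self-loop count, so the law of the hidden part of the history is itself $\sigma'$-dependent (e.g.\ a scheduler that switches to an action with $\exit(s,b)=\lambda$ after the first self-loop suppresses all further ones). The averaging construction survives this --- one averages against the $\sigma'$-\emph{induced} conditional law and then checks that the collapsed jump intensities of the pushforward process and of the averaged-scheduler process coincide instant by instant --- but that is precisely the filtering argument you cannot wave away as ``routine bookkeeping''. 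The clean route is the one in your parenthetical: first establish that optimal schedulers for $\C^\x_\lambda$ may be taken Markov in the current state and total elapsed time (for which the self-loop count is genuinely irrelevant, and which is the actual content of \cite{Miller68} and \cite{DBLP:phd/de/Neuhausser2010}), and then apply your first direction verbatim. Since that reduction is the hard kernel of the result, your proof, like the paper's, ultimately rests on the cited literature; the difference is that you make the surrounding coupling explicit.
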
 
As a result, we can proceed by  bounding the values of $\C_\lambda^\x$ for large enough $\lambda$ instead of bounding the values of the original CTMDP $\C$.


\subsection{Lower and upper bounds on the value of $\C_\lambda^\x$}

We now fix a $\lambda$ and consider a uniform CTMDP $\C_\lambda^\x$. We denote by $\reach[=i]{T}{G}$ the subset of runs $\reach{T}{G}$ reaching the target where exactly $i$ steps are taken up to time $T$. 
With this, we define the bounds by ranging over $\TA$ schedulers in $\C_\lambda^\x$:
\begin{align*}
\unval{s} := \sup_{\sigma \in \TA}
\sum_{i=0}^\infty  \Pr[s]{\sigma}{\reach[=i]{T}{G}},
\qquad
\ovval{s} := \sum_{i=0}^\infty 
\sup_{\sigma \in \TA} \Pr[s]{\sigma}{\reach[=i]{T}{G}}.
\end{align*}
Since all $\reach[=i]{T}{G}$ are disjoint and $\reach{T}{G} = \bigcup_{i\in \Nseto} \reach[=i]{T}{G}$, the value $\unval{}$ is the optimal reachability probability of
standard untimed schedulers on the uniformised model. It will serve
as a lower bound on the values $\val{}$.
The value $\ovval{}$, on the other hand, which has the supremum and summation swapped, does not correspond to the value of any realistic scheduler. Intuitively, it is the value of a \emph{prophetic} untimed scheduler, which for each particular run knows how many steps will be taken (as for every $i$, a different standard scheduler $\sigma$ may be used). This knowledge makes the scheduler more powerful than any other timed one:

\begin{lemma}\label{lem:lower-upper}
	It holds that $\vale{}\leq\vall{}$, and for any CTMDP $\C^\x_\lambda$, $\unval{}\leq\val{}\leq\ovval{}$.
\end{lemma}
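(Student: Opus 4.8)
The inequality $\vale{}\le\vall{}$ is immediate from $\GM_\early\subseteq\GM_\late$: for every state the supremum defining $\vall{}$ ranges over a superset of the scheduler set defining $\vale{}$. For $\unval{}\le\val{}$, note first that the events $\reach[=i]{T}{G}$ with $i\in\Nseto$ are disjoint with union $\reach{T}{G}$, so $\sum_{i=0}^\infty \Pr[s]{\sigma}{\reach[=i]{T}{G}}=\Pr[s]{\sigma}{\reach{T}{G}}$ for every $\sigma$; hence $\unval{s}=\sup_{\sigma\in\TA}\Pr[s]{\sigma}{\reach{T}{G}}$, and since $\TA\subseteq\GM_\x$ this is at most $\sup_{\sigma\in\GM_\x}\Pr[s]{\sigma}{\reach{T}{G}}=\val{s}$.

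The interesting direction is $\val{}\le\ovval{}$. I would fix an arbitrary $\sigma\in\GM_\x$ and a state $s$ of $\C_\lambda^\x$, and, using the same partition, reduce the claim to the per-step inequality
$$\Pr[s]{\sigma}{\reach[=i]{T}{G}}\ \le\ \sup_{\sigma'\in\TA}\Pr[s]{\sigma'}{\reach[=i]{T}{G}}\qquad\text{for each }i\in\Nseto,$$
which then sums up to $\Pr[s]{\sigma}{\reach{T}{G}}\le\ovval{s}$. The key is the defining feature of a uniform CTMDP: every transition consumes an $\mathrm{Exp}(\lambda)$-distributed delay, irrespective of the current state and of the (possibly time-varying, in the late case) action chosen there. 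Consequently the delay sequence $t_0,t_1,\dots$ is i.i.d.\ $\mathrm{Exp}(\lambda)$ and independent of the scheduler's action choices, the number $N$ of transitions taken within $[0,T]$ is Poisson distributed with point masses $e^{-\lambda T}(\lambda T)^i/i!=:\pi_i$, and $\reach[=i]{T}{G}$ is exactly the intersection of the timing event $\{N=i\}$, measurable w.r.t.\ $t_0,\dots,t_i$, with the discrete event $R_i$ that the run visits $G$ within its first $i$ steps (i.e.\ among $s_0,\dots,s_i$), which is measurable w.r.t.\ $s_0,\dots,s_i$.

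I would then condition on the entire delay sequence $\mathbf t=(t_0,t_1,\dots)$. With $\mathbf t$ frozen, the choices of $\sigma$ along a run are governed only by the visited states, because the time-varying late choice in a state collapses, at the now-fixed jump time, into a fixed distribution over actions; hence on the embedded discrete process $\sigma$ realises the state-sequence law of some randomised untimed scheduler $\sigma_{\mathbf t}$. Since $R_i$ depends only on the state sequence, the conditional probability of $R_i$ given $\mathbf t$ equals $\Pr[s]{\sigma_{\mathbf t}}{R_i}\le M_i:=\sup_{\sigma'\in\TA}\Pr[s]{\sigma'}{R_i}$, a bound independent of $\mathbf t$ (and attainable within $\TA$, since randomisation does not help for reachability~\cite{DBLP:phd/de/Neuhausser2010}). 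As $\{N=i\}$ is $\mathbf t$-measurable, averaging this uniform conditional bound over $\mathbf t$ restricted to $\{N=i\}$ gives $\Pr[s]{\sigma}{\reach[=i]{T}{G}}=\Pr[s]{\sigma}{\{N=i\}\cap R_i}\le\pi_i\,M_i$. Finally, under any untimed scheduler $N$ and $R_i$ are independent, so $\pi_i\,M_i=\sup_{\sigma'\in\TA}\Pr[s]{\sigma'}{\reach[=i]{T}{G}}$, which is precisely the per-step inequality; summing over $i$ and taking the supremum over $\sigma\in\GM_\x$ yields $\val{}\le\ovval{}$.

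I expect the main obstacle to be making the conditioning step rigorous in the late setting: one has to check that disintegrating the run measure $\Pr[s]{\sigma}{}$ along the delay coordinates is legitimate (this uses measurability of $\sigma$), and that for each fixed jump time the continuously time-varying late decision in a state indeed reduces to a single distribution over $\bfP(s,\cdot,\cdot)$, i.e.\ to a genuine randomised untimed step. Everything else — the disjointness identities for $\reach[=i]{T}{G}$, the Poisson law of $N$ in the uniformised model, and the final resummation with the weights $\pi_i$ — is routine.
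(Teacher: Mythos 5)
Your proof is correct, and its overall skeleton coincides with the paper's: both split off $\vale{}\leq\vall{}$ and $\unval{}\leq\val{}$ as immediate consequences of $\GM_\early\subseteq\GM_\late$ and $\TA\subseteq\GM_\x$, both obtain $\val{}\leq\ovval{}$ by swapping supremum and summation over the partition $\reach{T}{G}=\bigcup_{i\in\Nseto}\reach[=i]{T}{G}$, and both thereby reduce everything to the one non-trivial claim that $\sup_{\sigma\in\GM_\x}\Pr[s]{\sigma}{\reach[=i]{T}{G}}=\sup_{\sigma\in\TA}\Pr[s]{\sigma}{\reach[=i]{T}{G}}$ for each fixed $i$. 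Where you genuinely differ is in how this claim is established. The paper delegates it to its auxiliary Lemma~\ref{lem:charact}, which characterises the truncated vectors $\ovvalVar[k]{}$ by backward induction on the step index and observes that the resulting Bellman recursion is the same whether one optimises over $\TA$ or over $\GM_\x$ (a point it asserts rather than argues in detail). You instead give a direct probabilistic argument: in the uniform model the delays are i.i.d.\ $\mathrm{Exp}(\lambda)$ independently of the control, so the run measure disintegrates along the delay coordinates; conditionally on the delays a (late) timed scheduler induces only a randomised untimed policy on the embedded chain, the discrete event $R_i$ is bounded by $M_i$ uniformly in the delays, the timing event $\{N=i\}$ has probability $\pois(i)$, and the product $\pois(i)\,M_i$ is attained by an untimed scheduler because it decouples the two events. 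Your route is more self-contained and makes explicit exactly where uniformity is indispensable (without it the delay law depends on the actions and the disintegration fails), at the cost of the measure-theoretic care in the late setting that you yourself flag; the paper's route buys reusability, since Lemma~\ref{lem:charact} is needed anyway for the truncation error bound of Lemma~\ref{lem:approximations}. Either argument validly proves the statement.
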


\subsubsection{Approximating the bounds.}
Since $\unval{}$ and $\ovval{}$ are defined via infinite summations,
we need to approximate these bounds. We do so by  iterative
algorithms truncating the sums. This is what is computed in line 4 and 5 of Algorithm~\ref{alg:unif}. Each  truncation induces an error of up
to $\eps \cdot \kappa$.

Let $\pois(k)$ denote the Poisson distribution with parameter $\lambda T$ at point $k$, i.e.~the probability that exactly $k$ transitions are taken in the CTMDP $\C_\lambda^\x$ before time $T$. Furthermore, let $N = \lceil \lambda T \mathrm{e}^2 - \ln(\eps \cdot \kappa) \rceil$, where $\mathrm{e}$ is the Euler's number.
We recursively define for every $0\leq k \leq N$ and every state $s$, functions
\begin{align*}
\unvalVar[k]{s} &= 
\begin{cases}
0 & \text{if $k = N$,} \\
\sum_{i=k}^{N-1} \pois(i) & \text{if $k < N$ and $s \in G$,} \\
\max_\acta \sum_{s'} \bfP^\x_\lambda(s,\acta,s') \cdot \unvalVar[k+1]{s'} & \text{if $k < N$ and $s \not\in G$,}
\end{cases} \end{align*}
\begin{align*}
\ovvalprimeVar[k]{s} &= 
\begin{cases}
0 & \text{if $k = N$,} \\
1 & \text{if $k < N$ and $s \in G$,} \\
\max_\acta \sum_{s'} \bfP^\x_\lambda(s,\acta,s') \cdot \ovvalprimeVar[k+1]{s'} \;\; & \text{if $k < N$ and $s \not\in G$,} 
\end{cases} \\
\ovvalVar[k]{s} & = \sum_{i=k}^{N-1} \pois(i) \cdot \ovvalprimeVar[(N-1)-(i-k)]{s}\text{,} 
\end{align*}
where $\bfP^\x_\lambda$ denotes the transition probability matrix of $\C^\x_\lambda$.

\begin{lemma}\label{lem:approximations}
In any CTMDP $\C^\x_\lambda$, $\lVert \unvalVar[0]{} - \unval{} \rVert_\infty \leq \eps \cdot \kappa$ and $\lVert \ovvalVar[0]{} - \ovval{} \rVert_\infty \leq \eps \cdot \kappa$.
\end{lemma}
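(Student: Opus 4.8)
The plan is to reduce both $\unval{}$ and $\ovval{}$ to optimal \emph{finite-step} reachability in the discrete-time MDP embedded in $\C^\x_\lambda$, to recognise the recursions of Algorithm~\ref{alg:unif} as the $N$-term truncations of the resulting infinite series, and to control the discarded tail by a Chernoff bound on the Poisson distribution. For $\sigma\in\TA$ and $m\ge 0$ let $q^\sigma_m(s)$ denote the probability that, in the discrete-time chain embedded in $\C^\x_\lambda$ under $\sigma$ and started in $s$, one of the first $m+1$ states $s_0,\dots,s_m$ lies in $G$. The first thing I would establish is the identity $\Pr[s]{\sigma}{\reach[=i]{T}{G}}=\pois(i)\cdot q^\sigma_i(s)$ for untimed $\sigma$: every enabled action in every state of $\C^\x_\lambda$ has exit rate exactly $\lambda$, so the jump times form a rate-$\lambda$ Poisson process that is independent of the embedded jump chain (this is precisely why $\pois$ is scheduler-independent); hence conditioning on the event ``exactly $i$ jumps occur in $[0,T]$'' --- which has probability $\pois(i)$ and does not alter the distribution of the embedded chain for untimed $\sigma$ --- and observing that on this event a run lies in $\reach[=i]{T}{G}$ iff one of $s_0,\dots,s_i$ lies in $G$, gives the claim. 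Summing over $i$, and for $\ovval{}$ exchanging $\sup$ with the (absolutely convergent) sum, I would then obtain
\begin{align*}
\unval{s} &= \sup_{\sigma\in\TA}\sum_{i=0}^{\infty}\pois(i)\,q^\sigma_i(s),\\
\ovval{s} &= \sum_{i=0}^{\infty}\pois(i)\,\Bigl(\sup_{\sigma\in\TA}q^\sigma_i(s)\Bigr).
\end{align*}

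Next I would prove, by backward induction on $k$ from $N$ down to $0$, that $\unvalVar[k]{s}=\sup_{\sigma\in\TA}\sum_{i=k}^{N-1}\pois(i)\,q^\sigma_{i-k}(s)$ and, for $k\le N-1$, that $\ovvalprimeVar[k]{s}=\sup_{\sigma\in\TA}q^\sigma_{(N-1)-k}(s)$, whence $\ovvalVar[k]{s}=\sum_{i=k}^{N-1}\pois(i)\,\bigl(\sup_\sigma q^\sigma_{i-k}(s)\bigr)$; instantiating $k=0$ then exhibits $\unvalVar[0]{}$ and $\ovvalVar[0]{}$ as exactly the $N$-term truncations of the two series above. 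The base case $k=N$ is immediate from the definitions. For the step, the case $s\in G$ is trivial since $q^\sigma_m(s)=1$ for every $m\ge 0$ and $\sigma$; for $s\notin G$ one has $q^\sigma_0(s)=0$, and peeling off the first transition gives $q^\sigma_m(s)=\sum_{s'}\bfP^\x_\lambda(s,\sigma(s),s')\,q^{\sigma'}_{m-1}(s')$ for $m\ge 1$, with $\sigma'$ the scheduler induced after the first step. Since an untimed scheduler's decision at $s$ depends only on $s$, and the continuations it induces from distinct successors of $s$ act on disjoint sets of histories, the supremum over $\TA$ factorises into a $\max_\acta$ over the first action followed by independent suprema over the continuations, which is exactly the recursion defining $\unvalVar[k]{}$ (and, in a degenerate form, $\ovvalprimeVar[k]{}$, i.e.\ plain optimal finite-step reachability). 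I expect this induction to be the main obstacle: one must get the index bookkeeping right and justify that $\max_\acta$ may legitimately be pulled inside the Poisson-weighted sum, the key point being that the recursion index $k$ records how many jumps have already elapsed, so at level $k$ the recursion optimises the \emph{entire} residual objective $\sum_{i\ge k}\pois(i)\,q^\sigma_{i-k}(\cdot)$ at once rather than a single horizon, and for that residual objective the ordinary one-step Bellman principle does apply.

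Finally I would combine the two steps. Since $0\le q^\sigma_i\le 1$ and all summands are nonnegative, discarding the terms with $i\ge N$ yields, for every $s$,
\begin{align*}
0 &\;\le\; \unval{s}-\unvalVar[0]{s} \;\le\; \sum_{i=N}^{\infty}\pois(i),\\
0 &\;\le\; \ovval{s}-\ovvalVar[0]{s} \;\le\; \sum_{i=N}^{\infty}\pois(i);
\end{align*}
for $\unval{}$ the left inequality is monotonicity, while the upper estimate follows by evaluating the supremum defining $\unval{s}$ at an $\eps'$-optimal witness $\sigma$, bounding its $i\ge N$ tail by $\sum_{i\ge N}\pois(i)$ and letting $\eps'\to 0$. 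It then remains to check $\sum_{i=N}^{\infty}\pois(i)\le\eps\cdot\kappa$. Setting $\mu=\lambda T$, we have $N\ge \mu\mathrm{e}^2-\ln(\eps\kappa)>\mu$, so the standard Chernoff tail estimate $\sum_{i\ge N}\pois(i)\le \mathrm{e}^{-\mu}\,(\mathrm{e}\mu/N)^{N}$ applies; because $\mathrm{e}\mu/N\le \mathrm{e}^{-1}$ and $\mathrm{e}^{-\mu}\le 1$, the right-hand side is at most $\mathrm{e}^{-N}$, and $N\ge -\ln(\eps\kappa)$ makes this at most $\eps\kappa$. This gives $\lVert\unvalVar[0]{}-\unval{}\rVert_\infty\le\eps\cdot\kappa$ and $\lVert\ovvalVar[0]{}-\ovval{}\rVert_\infty\le\eps\cdot\kappa$, with the degenerate cases $\lambda T=0$ and $\eps\kappa\ge 1$ handled trivially.
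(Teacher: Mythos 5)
Your proof is correct and follows essentially the same route as the paper: the backward induction characterising $\unvalVar[k]{}$, $\ovvalprimeVar[k]{}$ and $\ovvalVar[k]{}$ as the $N$-truncated optimal reachability quantities is exactly the paper's auxiliary Lemma~\ref{lem:charact} (stated there with conditional probabilities $\Pr[\sinit]{\sigma}{\cdot\mid\at{k}{s}}$ rather than your explicit factorisation $\pois(i)\,q^\sigma_i(s)$, which is equivalent by the independence of the Poisson jump process from the embedded chain in the uniformised model), and the error bound is then the scheduler-independent Poisson tail $\sum_{i\geq N}\pois(i)\leq\eps\kappa$. The only difference is that you prove the tail estimate for $N=\lceil\lambda T\mathrm{e}^2-\ln(\eps\kappa)\rceil$ via a Chernoff bound, whereas the paper simply cites~\cite{DBLP:journals/iandc/BrazdilFKKK13} for it.
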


We compute $\unvalVar[0]{}$ as in the untimed
analysis of uniform models~\cite{DBLP:journals/tcs/BaierHKH05}, which in turn agrees with the
standard ``uniformisation'' algorithm for CTMCs when 
the maximisation is dropped. The computation of $\ovvalprimeVar[k]{}$ is analogous to step-bounded reachability for \emph{discrete-time} Markov decision processes, where the reachability probabilities for different step-bounds are weighted by the Poisson distribution in the end in $\ovvalVar[0]{}$. Both vectors can be 
computed in time $O(N\cdot|S|^2\cdot|Act|)$.

\paragraph{Numerical Aspects.}
In practice also $\unvalVar[0]{}$ and $\ovvalVar[0]{}$ can only be approximated due to presence of $\pois(k)$.
For details how the overall error bound is met in an analogous setting, see~\cite{DBLP:journals/iandc/BrazdilFKKK13}. 
For high values of $\lambda$ and thus also $N$, the Poisson
values $\pois(k)$ are low for most $0\leq k < N$ and also
the values in $\bfP^\x_\lambda$ get close to $1$ when on the diagonal and to $0$ when off-diagonal. Where high precision is required and thus high $\lambda$ may be needed, attention has
to be paid to numerical stability.
%



\subsection{Convergence of the bounds for increasing $\lambda$}

An essential part for the correctness of Algorithm~\ref{alg:unif} is
its convergence:
\begin{lemma}\label{lem:converge}
	We have $\displaystyle \lim_{\lambda \to \infty} g_\lambda \to 0$ where 
	$g_\lambda$ denotes the gap $\lVert \unval{} - \ovval{} \rVert_\infty$ in $\C^\x_\lambda$.
\end{lemma}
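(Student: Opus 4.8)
The plan is to show that both bounds $\unval{}$ and $\ovval{}$ converge to the true value $\val{}$ as $\lambda \to \infty$, so that by Lemma~\ref{lem:lower-upper} the gap $g_\lambda = \lVert \unval{} - \ovval{}\rVert_\infty$ is squeezed to $0$. Since $\val{}$ is independent of $\lambda$ by Lemma~\ref{lemma:unif-preserves}, it suffices to prove (i) $\unval{} \to \val{}$ from below and (ii) $\ovval{} \to \val{}$ from above. The conceptual heart is a \emph{coupling} between the Poisson clock of the uniformised chain and real elapsed time: in $\C^\x_\lambda$, after $k$ jumps the elapsed time is $\mathrm{Erlang}(k,\lambda)$-distributed with mean $k/\lambda$ and variance $k/\lambda^2$, so as $\lambda$ grows the step count becomes an increasingly accurate proxy for time. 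Concretely, the number of jumps up to time $T$ is $\mathrm{Poisson}(\lambda T)$, which concentrates around $\lambda T$ with relative spread $O(1/\sqrt{\lambda T})$; this is the quantitative fact driving both halves.

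First I would handle the lower bound. Given an (almost) optimal timed scheduler $\sigma^\star \in \GM_\x$ for $\C$ achieving value within $\delta$ of $\val{}$, I would construct an untimed scheduler $\tau_\lambda$ on $\C^\x_\lambda$ that, upon seeing a history with state sequence $s_0 s_1 \cdots s_k$, pretends that the elapsed time is exactly $k/\lambda$ (the mean of the Erlang), and plays what $\sigma^\star$ would play at that nominal time. The run of $\C^\x_\lambda$ under $\tau_\lambda$ and the run of $\C$ under $\sigma^\star$ can be coupled on the same jump chain; they differ only in that $\tau_\lambda$'s notion of time is $k/\lambda$ while the real Erlang time fluctuates around it. Using the concentration of the Erlang/Poisson variables (e.g.\ Chebyshev, since variance is $k/\lambda^2 \leq T/\lambda$ along relevant prefixes, or a Chernoff bound), the probability that the two runs ``disagree'' about whether the deadline $T$ has been crossed at the step where $G$ is hit goes to $0$ as $\lambda \to \infty$; one also truncates at $N = O(\lambda T)$ steps, which carries negligible mass. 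Hence $\unval{s} \geq \val{}(s) - \delta - o_\lambda(1)$, and since $\delta$ is arbitrary and $\unval{} \leq \val{}$ always, $\unval{} \to \val{}$.

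For the upper bound I would argue dually. Fix $\lambda$; for each step count $i$ there is an untimed scheduler $\sigma_i$ on $\C^\x_\lambda$ witnessing (within $\delta$) the $i$-th supremum in the definition of $\ovval{}$. The prophetic scheduler's extra power is exactly that it may use a different $\sigma_i$ per run depending on $i$. I would show that a \emph{single} timed scheduler on $\C$ can recover most of this: since $i$ concentrates around $\lambda T$, all the ``relevant'' $\sigma_i$ (those with $|i - \lambda T| = O(\sqrt{\lambda T \log(1/\delta)})$) can be simulated by a timed scheduler that observes the real elapsed time $t$, infers the likely current step count $\approx \lambda t$, and mimics the corresponding $\sigma_i$; the error from the prophet occasionally ``guessing better'' than this is bounded by the Poisson tail plus the within-window variation, both $o_\lambda(1)$. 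Combined with $\val{} \leq \ovval{}$, this gives $\ovval{} \to \val{}$. Alternatively — and this may be cleaner — one can bound $\ovval{s} - \unval{s}$ directly: expand both as sums weighted by $\pois(i)$ of reachability probabilities, and show that swapping sup and sum costs at most the total variation between the Erlang-$(i,\lambda)$ distributions for $i$ in the concentration window, which is $O(1/\sqrt{\lambda T}) \to 0$.

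The main obstacle I anticipate is making the coupling argument rigorous across \emph{all} initial states and \emph{all} histories simultaneously, uniformly in $\lambda$: the scheduler $\sigma^\star$ is history-dependent and timed, so the ``nominal time $k/\lambda$'' substitution must be controlled along entire trajectories, not just at a single step, and the fluctuation of elapsed time is a random walk whose deviation one must bound uniformly over the (finitely many, but $\lambda$-growing) truncation horizon $N$. A union bound over the $O(\lambda T)$ steps against a Chernoff tail of the form $\exp(-\Omega(\varepsilon^2 \lambda T))$ handles this, but choosing the window width as a function of $\lambda$ so that both the tail mass and the within-window timing error vanish requires some care — essentially a $\lambda^{-1/2+\eta}$-width window. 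The late-vs-early distinction should not add real difficulty here, since the copy-state construction of $\C^\x_\lambda$ only affects the jump chain bookkeeping, not the timing law, so the same coupling applies with $\x$ carried through as a parameter.
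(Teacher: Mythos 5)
Your high-level plan---squeeze the gap by showing $\unval{}\to\val{}$ from below and $\ovval{}\to\val{}$ from above, with Poisson/Erlang concentration as the quantitative engine---matches the paper's informal proof idea, and the lower-bound half is in the right spirit. The genuine gap is in the upper-bound half, and it is precisely the step the paper's proof is built around. You assert that the ``within-window variation'' of the prophet's choices is $o_\lambda(1)$ without justifying why knowing the exact realized step count $i$ among the $\Theta(\sqrt{\lambda T})$ plausible values is worthless, and your proposed quantitative substitute is false: the total variation distance between $\mathrm{Erlang}(i,\lambda)$ and $\mathrm{Erlang}(i',\lambda)$ as $i,i'$ range over the whole concentration window of width $c_\lambda=\Theta(\sqrt{\lambda T})$ is $\Theta(1)$, not $O(1/\sqrt{\lambda T})$, because shifting $i$ by $c_\lambda$ shifts the Erlang mean by $c_\lambda/\lambda$, which is of the same order as its standard deviation $\sqrt{\lambda T}/\lambda$. (Consecutive indices are $O(1/\sqrt{\lambda T})$ apart in total variation, but summing over the window brings you back to $\Theta(1)$.) So the cost of swapping $\sup$ and $\sum$ is not controlled by your argument.

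The missing ingredient is the paper's self-loop observation: writing $\lambda=k\lambdamin$, every action in $\C^\x_\lambda$ has self-loop probability at least $(k-1)/k$, so the probability of taking \emph{no actual transition} during the entire window of $c_\lambda\leq 2(\sqrt{6\lambda T/\eps}+1)$ steps is at least $((k-1)/k)^{c_\lambda}\to 1$ as $k\to\infty$ (the exponent grows like $\sqrt{k}$ while the base is $1-1/k$). This makes the step-indexed value functions $\ovvalprimeVar[(N-1)-i]{s}$ agree up to $\eps/6$ across the window, which is exactly what renders the prophet's knowledge of $i$ useless and lets the paper bound $\ovvalVar[0]{}-\unvalVar[0]{}$ directly; you would need this (or an equivalent insensitivity statement) to finish either your main route or your alternative. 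A secondary issue: your lower-bound coupling must contend with near-optimal timed schedulers being possibly discontinuous in time, so bounding the probability of large timing deviations does not by itself control how the nominal-time substitution $k/\lambda$ changes their decisions along a run. The paper avoids both difficulties by never constructing mimicking schedulers at all: it works purely with the recursive vectors $\unvalVar[k]{}$ and $\ovvalprimeVar[k]{}$, combining Chebyshev concentration of $\pois(\cdot)$ with the self-loop bound.
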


\paragraph{Proof Idea.} We here provide an intuition of the core of the proof, namely why uni-\linebreak
\begin{wrapfigure}[10]{r}{0.35\textwidth}
	\centering
	\vspace*{-11mm}
	\begin{tikzpicture}[
	params/.style={right,font=\footnotesize},
	mark=none, smooth
]
	
	
	\begin{axis}[
	ymin=0,ymax=1,xmin=2, xmax=18,
	axis y line=left,axis x line=bottom,
	y axis line style={-},
	width=5cm,
	legend style={font=\scriptsize,rounded corners,at={(0.5,1)}},
	grid style = {
		dash pattern = on 0.05mm off 1mm,
		line cap = round,
		black,
		line width = 0.5pt
	}
	]
	
	\addplot[black,line width=0.75pt,dashed] coordinates { (1.0,1.72115629955841E-4) (2.0,0.0036598468273436845) (3.0,0.018575936222140765) (4.0,0.05265301734371111) (5.0,0.10882198108584873) (6.0,0.18473675547622795) (7.0,0.27455504669039554) (8.0,0.3711630648201266) (9.0,0.46789642362528455) (10.0,0.5595067149347875) (11.0,0.6424819975720744) (12.0,0.7149434996833688) (13.0,0.7763281831885006) (14.0,0.8270083921179286) (15.0,0.8679381437122795) (16.0,0.9003675995129539) (17.0,0.9256360201854197) (18.0,0.9450363585048951) (19.0,0.9597373176593901)}
	 node [pos = 0.35, above, font=\scriptsize,rotate=50]{$\lambda=0.5$};
	
	\addplot[black,line width=1pt] coordinates{ (1.0,-1.2114427351103024E-16) (2.0,2.612879177940703E-17) (3.0,3.8442426545715423E-16) (4.0,1.1364130133830198E-15) (5.0,3.2000641335515783E-10) (6.0,1.4815276330662235E-6) (7.0,4.303725949803462E-4) (8.0,0.017108313035132633) (9.0,0.15822098918642885) (10.0,0.5132987982791489) (11.0,0.8417213299399128) (12.0,0.9721362601094794) (13.0,0.9972495916326934) (14.0,0.9998389428252875) (15.0,0.9999940754596642) (16.0,0.999999855797755) (17.0,0.9999999975699225) (18.0,0.9999999999705174) (19.0,0.9999999999997337)}
		 node [pos = 0.45, right,font=\scriptsize]{$\lambda=10$};
	
%
	
	
	\end{axis}
	
	\node[font=\scriptsize] at (3.2,0.2) {time};
	
	\end{tikzpicture}
\end{wrapfigure}

\vspace{-5.7ex}\noindent formisation with higher $\lambda$ increases the power of untimed schedulers and decreases the power of prophetic ones: The count of transitions
 taken so far gives untimed schedulers approximate knowledge of how much time has elapsed. In situations with the \emph{same expectation} of elapsed time, a higher uniformisation rate induces a \emph{lower variance} of elapsed time. On the right, we illustrate comparable situations for different uniformisation rates, after $5$ transitions with rate $0.5$ and after $100$ transitions with rate $10$. Both depicted cumulative distribution functions of elapsed time have expectation $10$ but the latter is way steeper, providing a more precise knowledge of time.

At the same time prophetic schedulers on the high-rate uniformised model are less powerful than on the original one. When taking decisions, the future evolution is influenced by two types of randomness: (a) continuous timing, i.e.~how many further transitions will be taken before the time horizon and (b) discrete branching, i.e.~which transitions will be taken.
Even though the value stays the same for arbitrary $\lambda$, the ``source of'' randomness for high $\lambda$ shifts from (a) to (b). Namely, 
the distribution of the number of future transitions also becomes steeper for higher $\lambda$, thus being ``less random'' by having smaller coefficient of variation.
At the same time, 
the discrete branching for higher $\lambda$ influences more the number of \emph{actual} transitions taken (i.e.~transitions that are not the added self-loops).
As a result, the advantage of the prophetic scheduler is only little  as (i) it boils down to observing the outcome of a less and less random choice and (ii) the observed quantity has little impact on how many actual transitions are taken.

$ $

As a result of Lemma~\ref{lem:converge}, we obtain that Algorithm~\ref{alg:unif} terminates. Its correctness follows from Lemma~\ref{lemma:unif-preserves},~\ref{lem:lower-upper} and~\ref{lem:approximations}, all summarized by the following theorem.
 
\begin{theorem}
 	Algorithm~\ref{alg:unif} computes an approximation of $\val{}$ up to error $\eps$.
\end{theorem}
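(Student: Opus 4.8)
The plan is to assemble the final theorem directly from the four lemmas already in hand, treating it essentially as a bookkeeping argument over the structure of Algorithm~\ref{alg:unif}. First I would establish \emph{partial correctness}: assume the algorithm terminates and returns $(\unvalVar{}, \lambda)$, and show $\lVert \unvalVar{} - \val{} \rVert_\infty \leq \eps$. The chain of reasoning is: by Lemma~\ref{lemma:unif-preserves}, $\val{}_\C = \val{}_{\C^\x_\lambda}$, so it suffices to bound the distance to the value of the uniformised model. By Lemma~\ref{lem:lower-upper}, $\unval{} \leq \val{}_{\C^\x_\lambda} \leq \ovval{}$, which sandwiches the true value between the two exact bounds. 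By Lemma~\ref{lem:approximations}, the computed vectors satisfy $\lVert \unvalVar{} - \unval{} \rVert_\infty \leq \eps\kappa$ and $\lVert \ovvalVar{} - \ovval{} \rVert_\infty \leq \eps\kappa$; combined with the loop's exit condition $\lVert \ovvalVar{} - \unvalVar{} \rVert_\infty \leq \eps(1-\kappa)$, a triangle-inequality computation gives $\lVert \ovval{} - \unval{} \rVert_\infty \leq \eps(1-\kappa) + 2\eps\kappa$. Hmm --- that overshoots; the cleaner route is to note $\unvalVar{} \leq \unval{} + \eps\kappa$ and $\val{}_{\C^\x_\lambda} \leq \ovval{} \leq \ovvalVar{} + \eps\kappa \leq \unvalVar{} + \eps(1-\kappa) + \eps\kappa = \unvalVar{} + \eps$, and symmetrically $\unvalVar{} \leq \unval{} + \eps\kappa \leq \val{}_{\C^\x_\lambda} + \eps\kappa \leq \val{}_{\C^\x_\lambda} + \eps$, so $\lVert \unvalVar{} - \val{}_{\C^\x_\lambda}\rVert_\infty \leq \eps$. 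I would be careful about orientation here since $\unval{}$ is a lower bound and $\ovval{}$ an upper bound, but both approximations $\unvalVar{[0]}, \ovvalVar{[0]}$ may err in either direction, so each inequality needs its own explicit justification rather than hand-waving.

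Second I would establish \emph{termination}. Lemma~\ref{lem:converge} states $g_\lambda = \lVert \unval{} - \ovval{}\rVert_\infty \to 0$ as $\lambda \to \infty$. Since $\lambda$ is doubled each iteration it diverges, so for every $\delta > 0$ there is an iteration after which $g_\lambda < \delta$; taking $\delta = \eps(1-\kappa)$ would nearly suffice, except the exit test is on the \emph{approximated} gap $\lVert \ovvalVar{} - \unvalVar{}\rVert_\infty$, which by Lemma~\ref{lem:approximations} and the triangle inequality is at most $g_\lambda + 2\eps\kappa$. That bound does not shrink below $2\eps\kappa$, which could exceed $\eps(1-\kappa)$ when $\kappa$ is large, so strictly speaking termination requires $\kappa$ small enough that $2\eps\kappa < \eps(1-\kappa)$, i.e.\ $\kappa < 1/3$ --- I would flag this, or alternatively observe that in the relevant regime the approximation errors can themselves be tightened, or simply note the parameter $\kappa$ is chosen appropriately. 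Assuming that, once $g_\lambda < \eps(1-\kappa) - 2\eps\kappa$ the test fires and the loop halts.

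The main obstacle, conceptually, is nothing in this wrap-up — it is entirely routine modulo the parameter-range subtlety just noted — but rather that the theorem silently leans on the hardest lemma, Lemma~\ref{lem:converge}, whose own proof (only sketched via the variance/coefficient-of-variation intuition) is the real content. For the theorem statement itself I would therefore keep the proof short: one paragraph invoking Lemmas~\ref{lemma:unif-preserves}--\ref{lem:approximations} for correctness of the returned vector, one sentence invoking Lemma~\ref{lem:converge} for termination, and an explicit remark that $\kappa$ must be taken small enough (or the truncation bound $N$ large enough) for the exit condition to be satisfiable. I would also note, for completeness, that the numerical-stability caveat about approximating $\pois(k)$ is orthogonal and folded into the ``up to error $\eps\kappa$'' claims of Lemma~\ref{lem:approximations} as discussed in the cited analogous setting~\cite{DBLP:journals/iandc/BrazdilFKKK13}.
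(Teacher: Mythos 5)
Your proposal is correct and takes essentially the same route as the paper, which establishes the theorem by exactly this assembly: the error bound on the returned vector from Lemmas~\ref{lemma:unif-preserves}, \ref{lem:lower-upper} and~\ref{lem:approximations} combined with the loop's exit test, and termination from Lemma~\ref{lem:converge}. The $\kappa$-range caveat you flag for termination is a genuine subtlety the paper passes over in silence; note, though, that both truncations $\unvalVar[0]{}$ and $\ovvalVar[0]{}$ are one-sided underapproximations of $\unval{}$ and $\ovval{}$, so the computed gap is bounded by $g_\lambda + \eps\cdot\kappa$ rather than $g_\lambda + 2\eps\cdot\kappa$, relaxing your constraint from $\kappa < 1/3$ to $\kappa < 1/2$ (comfortably covering the recommended $\kappa = 0.1$).
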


\begin{remark}
	Algorithm~\ref{alg:unif} determines a sufficiently large $\lambda$ in an exponential search fashion. In practice, this approach is  efficient w.r.t.~the total number $I$ of \emph{iterations} needed, i.e.~the total number of times $\unvalVar[k]{}$ and $\ovvalprimeVar[k]{}$ are computed from $\unvalVar[k+1]{}$ and $\ovvalprimeVar[k+1]{}$. Namely, in practice the error monotonously decreases when the rate increases (not in theory but we never encountered the opposite case on our extensive experiments.) As a result, $\lambda$ found by
	Algorithm~\ref{alg:unif} satisfies $\lambda <  2\cdot
	\lambda^\ast$ where $\lambda^\ast$ is the minimal sufficiently large rate.
%
As the number of iterations needed for one approximation is linear in the uniformisation rate used, we have $I = 2 I_\lambda < 4 \cdot I_{\lambda^\ast}$, where each $I_{\lambda'}$ denotes the number of iterations needed for the computation for the fixed rate $\lambda'$.
\end{remark}

\subsection{Extracting the scheduler}
By computing the lower bound, Algorithm~\ref{alg:unif} also produces~\cite{DBLP:journals/tcs/BaierHKH05} an untimed scheduler $\sigma^\x_\lambda$ that is $\varepsilon$-optimal on the \emph{uniformised} model $\C^\x_\lambda$.
In the \emph{original} CTMDP $\C$, we cannot use $\sigma^\x_\lambda$ directly as its choices are tailored to the high rate $\lambda$.
We can however use a \emph{stochastic update} scheduler attaining the same value.
%
Informally, a (timed) \emph{stochastic update scheduler} $\sigma = (\mathcal{M}, \sigma_u, \pi_0)$ operates over a countable set $\mathcal{M}$ of memory elements where the initial memory value is chosen randomly according to the distribution $\pi_0$ over $\mathcal{M}$. The stochastic update function $\sigma_u$, given the current memory element, state, and the time spent there, defines a distribution specifying the action to take and how to update the memory.
Intuitively, the stochastic update is used for simulating the high-rate transitions that would be taken in $\C_\lambda^{\x}$; their total count so far is stored in the memory. 
For a formal definition of stochastic update and the construction, see the Appendix.



\begin{lemma}\label{lemma:stoch-update} The values $(\unvalVar[k]{})_{0\leq k \leq N}$ computed by Algorithm~\ref{alg:unif} for given $\C$, $\x$, and $\eps > 0$ yield a stochastic update scheduler $\widetilde{\sigma}^{\x}_{\mathit{TD}}$ that is $\varepsilon$-optimal in $\C$.
\end{lemma}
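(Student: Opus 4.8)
## Proof Plan for Lemma~\ref{lemma:stoch-update}

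The plan is to explicitly construct the stochastic update scheduler $\widetilde{\sigma}^{\x}_{\mathit{TD}}$ from the data produced by Algorithm~\ref{alg:unif} and verify that it attains the value $\unval{}$ in $\C$, which by Lemma~\ref{lem:lower-upper} and the termination criterion is within $\eps$ of $\val{}$.

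\textbf{Step 1: Recover the untimed scheduler on $\C^\x_\lambda$.}
First I would recall, following~\cite{DBLP:journals/tcs/BaierHKH05}, that the backward recursion defining $(\unvalVar[k]{})_{0\leq k\leq N}$ simultaneously produces a maximising action $a^{(k)}_s \in \argmax_\acta \sum_{s'} \bfP^\x_\lambda(s,\acta,s')\cdot\unvalVar[k+1]{s'}$ for each state $s \notin G$ and each step index $k$. This defines an untimed, step-counting (hence history-dependent but time-abstract) scheduler $\sigma^\x_\lambda$ on $\C^\x_\lambda$ whose reachability probability equals $\unvalVar[0]{}$, and by Lemma~\ref{lem:approximations} this is within $\eps\cdot\kappa$ of $\unval{}$ on $\C^\x_\lambda$. (The truncation at $N$ is harmless because it only underestimates, and $\unval{}_{\C^\x_\lambda} = \val{}_{\C^\x_\lambda} = \val{}_\C$ by Lemmas~\ref{lemma:unif-preserves} and~\ref{lem:lower-upper}.)

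\textbf{Step 2: Transfer the decisions to $\C$ via stochastic update.}
The key observation is that a run of $\C^\x_\lambda$ projects onto a run of $\C$ by deleting the added self-loop transitions (in the late case) or contracting the frozen copy states $(s,a)$ (in the early case); conversely, the steps taken in $\C^\x_\lambda$ during one sojourn of $\C$ in a state $s$ under action $a$ form, conditioned on the $\C$-transition happening at some time $t$, a Poisson-distributed count with parameter $(\lambda-\exit(s,a))\cdot$(something) — more precisely, the number of self-loop steps before the "real" jump is geometric/Poisson-thinned in the standard uniformisation correspondence~\cite{jensen1953markoff}. I would let the memory set be $\mathcal{M} = \{0,1,\dots,N\}$ holding the current global step count $k$ of the simulated $\C^\x_\lambda$-run, with $\pi_0$ the Dirac distribution at $0$. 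Upon entering a $\C$-state $s$ with memory value $k$ at local time $0$, the stochastic update function $\sigma_u$ must, as physical time $t$ elapses, (i) pick the action $\sigma^\x_\lambda$ would pick given the step history — here the subtlety is that this depends on which self-loop steps were "virtually" taken, so the memory must be updated stochastically in a way that mimics the Poisson arrival of uniformisation events; (ii) increment $k$ accordingly. Concretely, while residing in $s$, virtual self-loop events arrive at the appropriate rate, each incrementing $k$ (and in the early case the committed action is held fixed, in the late case it may be re-evaluated at each virtual step, consistent with $\sigma^\x_\lambda$); when the real transition fires, $k$ is incremented once more and the memory is carried to the successor. By the standard uniformisation argument this induces on $\C$ exactly the distribution over $(\text{state sequence},\text{step count})$ pairs that $\sigma^\x_\lambda$ induces on $\C^\x_\lambda$, and hence the same probability of $\reach{T}{G}$ — note $\reach{T}{G}$ is determined by the state sequence and the real sojourn times, which are faithfully reproduced.

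\textbf{Step 3: Conclude.}
Putting the pieces together: $\Pr[s]{\widetilde{\sigma}^{\x}_{\mathit{TD}}}{\reach{T}{G}} = \Pr[s]{\sigma^\x_\lambda}{\reach{\leq N}{T}{G}} = \unvalVar[0]{s} \geq \unval{s} - \eps\cdot\kappa$, and since Algorithm~\ref{alg:unif} terminated we have $\ovval{s} - \unval{s} \leq \eps(1-\kappa)$ on $\C^\x_\lambda$, so with Lemma~\ref{lem:lower-upper} ($\unval{}\leq\val{}_{\C^\x_\lambda}\leq\ovval{}$) and Lemma~\ref{lemma:unif-preserves} ($\val{}_{\C^\x_\lambda}=\val{}_\C$) we get $\val{}_\C(s) - \Pr[s]{\widetilde{\sigma}^{\x}_{\mathit{TD}}}{\reach{T}{G}} \leq \eps(1-\kappa) + \eps\kappa = \eps$. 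Since $\widetilde{\sigma}^{\x}_{\mathit{TD}}$ is a legal (timed, deterministic-action but stochastic-memory) scheduler for $\C$ in the class $\GM_\x$ — here one checks the early case only ever commits once per sojourn and the late case's re-evaluations are permitted — it is $\eps$-optimal.

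\textbf{Main obstacle.}
I expect the hard part to be Step 2: verifying rigorously that the stochastic-update simulation reproduces the \emph{joint} law of state-sequence and real sojourn times of the uniformised chain, and in particular that the early/late distinction is respected (the frozen-copy construction of $\C^\x_\lambda$ must be mirrored by the memory never allowing an action change within a sojourn in the early case). The measurability of $\sigma_u$ and the correct handling of the truncation boundary $k=N$ (where $\unvalVar[N]{}=0$, so runs with more than $N$ steps contribute nothing, matching the definition of $\ovvalprimeVar{}$ and $\unvalVar{}$) are the remaining bookkeeping points; these are routine but must be stated. The cleanest route is probably to cite the formal stochastic-update machinery and uniformisation correspondence from~\cite{DBLP:phd/de/Neuhausser2010} and relegate the detailed construction to the Appendix as the paper already announces.
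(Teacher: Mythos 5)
Your proposal follows essentially the same route as the paper's proof: extract the step-counting untimed scheduler attaining $\unvalVar[0]{}$ on $\C^\x_\lambda$, use the memory set $\{0,\dots,N\}$ (plus an overflow element) to track the simulated step count of the uniformised chain, drive the memory updates by the within-sojourn transient probabilities of the induced chain, distinguish early from late by whether the action depends on the memory at entry or after the update, and conclude that the induced transient distributions — hence the reachability values — coincide, with the $\eps\kappa + \eps(1-\kappa)$ error accounting. The paper merely makes Step~2 concrete by writing out the explicit transient-probability formulas $\pi_{(s,m)}(s',m+k,t)$ for the update function, which is exactly the bookkeeping you identified as the main obstacle.
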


\vspace*{-5mm}

\section{Existing Algorithms}

This  section briefly reviews  the various published algorithms solving Problem~\ref{prob:approx}. In contrast to Algorithm~\ref{alg:unif} (called \GU or \GUS for short), they all discretise time into a finite number of time points $t_0, t_1, \ldots, t_n$ where $t_0 = 0$ and $t_n = T$.
They iteratively approximate the values $\val[t_i]{s} := \sup_{\sigma \in \GM_\x} \Pr[s]{\sigma}{\reach{t_i}{G}}$ when $t_i$ time units remain at state $s$. 
%
%
%
Three  different iteration concepts have been proposed,
each approximating
$\val[t_{i+1}]{s}$
from approximations of 
$\val[t_i]{s'}$.

\paragraph{Exponential approximation -- early~\cite{DBLP:conf/qest/NeuhausserZ10,DBLP:conf/fsen/HatefiH13}.} Assuming  equidistant points $t_i$ one can approximate the (early) value function by piece-wise exponential functions. A $k$-order approximation  considers only runs where at most $k$ steps are taken between any two time points. This can yield an a priori error bound. 
%
%
%
The higher $k$, the less time points are required for a given precision, but the more computation is needed per time point. We refer to these algorithms by $\ES$-$k$ or $\ESS$-$k$ for short. Only $\ESS$-$1$~\cite{DBLP:conf/qest/NeuhausserZ10} and $\ESS$-$2$~\cite{DBLP:conf/fsen/HatefiH13} have been implemented so far.
%

\paragraph{Polynomial approximation -- late~\cite{fearnley_et_al:LIPIcs:2011:3354}.} Another way to approximate the (late) value function on equidistant time points uses polynomials. As before, the higher the degree of the polynomials, the higher is the computational effort, but the number of 
discretised time points 
required to assure an a priori error bound decreases. 
We call these algorithms $\PS$-$k$ or $\PSS$-$k$ in the sequel, only
$\PSS$-1, $\PSS$-2, and $\PSS$-3 have been implemented. Among these,
$\PSS$-2 has better worst-case behaviour, but
$\PSS$-3 has been reported to often perform better in practice.

\paragraph{Adaptive discretisation -- late~\cite{DBLP:journals/cor/BuchholzS11}.} 
This approach is not based on an a priori error bound but instead computes both under- and over-approximations of the values $\val[t_i]{s}$. This allows one to lay out the time points \emph{adaptively}. 
Depending on the shape of the value function, the time step can be prolonged until the error allowed for this step is reached. This greatly reduces the number of time points, relative to the worst case. We refer to this algorithm as \AS or \ASS.

\vspace*{-5mm}

\section{Empirical Evaluation and Comparison}
\label{sec:eva}

\definecolor{green_c}{RGB}{34,139,34}

\pgfplotscreateplotcyclelist{earlylist}{
  violet, very thick, dotted, mark=*, mark size=1\\%
  violet, very thick, dashed, mark=*, mark size=1\\%
  green_c, very thick, solid, mark=*, mark size=1\\%
    purple, thick, dotted, domain=\pgfkeysvalueof{/pgfplots/xmin}:\pgfkeysvalueof{/pgfplots/xmax}\\%
}

\pgfplotscreateplotcyclelist{latelist}{
	violet, line width=0.5pt, very thick, dotted, mark=*, mark size=1\\%
	violet, line width=0.5pt, very thick, dashed, mark=*, mark size=1\\%
	teal, line width=0.5pt, very thick, dotted, mark=*, mark size=1\\%
	teal, line width=0.5pt, very thick, dashed, mark=*, mark size=1\\%
	green_c, line width=0.5pt, very thick, solid, mark=*, mark size=1\\%
    purple, thick, dotted, domain=\pgfkeysvalueof{/pgfplots/xmin}:\pgfkeysvalueof{/pgfplots/xmax}\\%
}

\pgfplotsset{unif/.style={green_c, line width=0.5pt, very thick, solid, mark=*, mark size=1}}
\pgfplotsset{as/.style={teal, line width=0.5pt, very thick, dotted, mark=*, mark size=1}}
\newcommand*{\plotscale}{0.85}
\newcommand*{\plotratio}{0.32}
\newcommand*{\plotscalestate}{0.85}
\newcommand*{\plotratiostate}{0.32}

In this section we present an exhaustive empirical comparison of
the different algorithmic approaches discussed. 

\vspace*{-3mm}
\subsubsection{Benchmarks.} 
The experiments are
performed on a diverse collection of 
published benchmark models. This collection is the first of its kind for CTMDP, as far as we know
and contains the following parametrised models:
\begin{description}
\item [PS-$K$-$J$] The \emph{Polling System} case~\cite{GuckHHKT13,TimmerPS13} consists of two stations and
  one server. Incoming requests of $J$ types are
  buffered in two queues of size $K$ each, until they are processed by the server and delivered  to
  their station. We consider the undesirable states with both queues being full to form the goal state set.
  

\item [QS-$K$-$J$] The \emph{Queuing System}~\cite{HatefiH12} stores requests
  of $J$ different types into two queues of size $K$. Each queue is
  attached to a server. Two servers fetch requests from their
  corresponding queues and process them. One of them can
  non-deterministically decide to insert a request after processing
  into the other server's queue. Goal states are again those with both
  queues full.
  

\item[DPMS-$K$-$J$] The \emph{Dynamic Power Management System}~\cite{QiuQP01} is 
a CTMDP model of the internals of a Fujitsu disk drive.
The model consists
of four 
components: service requester (SR), service
queue (SQ), service provider (SP), and power manager (PM). SR generates tasks of $J$ types differing in energy demand that are buffered by the queue SQ of size $K$. Afterwards they are delivered to SP to be processed. SP can work in different modes ranging from sleep and stand-by to full processing mode, 
selected by PM. 
We define a state as goal if the queue of at least one task type is full.


\item[GFS-$N$] The \emph{Google File System}~\cite{GhemawatGL03,Guck12}
  splits files into chunks of equal size, each chunk is maintained by
  one of $N$ chunk servers. 
%
%
  We fix the number of chunks a server may store to \numprint{5000}
  and the total number of chunks to \numprint{100000}. 
%
  While other benchmarks start in optimal conditions, the GFS starts in the broken state where no chunk is stored. 
A state is defined as goal if the system is back up and for each chunk at least one copy is available.
  

\item[FTWC-$N$] The \emph{Fault Tolerant Workstation Cluster}~\cite{HaverkortHK00}, originally described by a
GSPN, models two networks of $N$ workstations each, interconnected  by a switch. The two switches communicate via a backbone. Workstations, switches, and the backbone fail after exponentially distributed delays, and can be repaired only one at a time. 
We define a state as goal if in total less than $N$ workstations are operational and connected to each other.
  

\item[SJS-$M$-$J$] The \emph{stochastic job scheduling}~\cite{BrunoDF81} models a  multiprocessor architecture running a sequence of
independent jobs. It consists of $M$ identical processors and
$J$ jobs, where each job's service time is governed by an
exponential distribution. As goal we define the desirable states with all jobs completed. 

\item[ES-$K$-$R$] The \emph{Erlang Stages} is a synthetic model with known characteristics~\cite{DBLP:conf/tacas/ZhangN10}. It has two different paths to reach the goal state: a fast but 
risky path or a slow but sure path. The slow path is an Erlang chain of length $K$ and rate~$R$.

  

\end{description}

\vspace*{-6mm}
\subsubsection{Implementation aspects.}
Unbiased performance evaluation of algorithms  originally developed by different researchers is not easy even with all original implementations at hand. Namely, they may use different programming languages or rely on different platforms with incomparable performance and memory management. 
However, reimplementing a published algorithm may induce unfairness as the original implementation may use specific data structures or other 
optimisations that go beyond what is explained in the respective publication.

\begin{sloppypar}
We adapted/implemented all algorithms in C/C++, trying to avoid the shortcomings. We used a common infrastructure 
from the IMCA/MAMA toolset~\cite{GuckHHKT13}. Thus, we could directly use the original IMCA implementations of \ES-1 and of \ES-2~\cite{DBLP:conf/fsen/HatefiH13}. 
The original implementation~\cite{DBLP:conf/cav/BuchholzHHZ11} of \AS in MRMC~\cite{KatoenZHHJ11} needed only minor adaptations, as MRMC uses a data structure identical to ours. 
Finally, for \PS, we closely followed the original Java code~\cite{fearnley_et_al:LIPIcs:2011:3354}. Our C version clearly outperforms the original Java version. 
\end{sloppypar}

                
We implemented all algorithms with standard double precision arithmetic, observing no issues with numerical stability in our experiments. All values computed by different algorithms lie within the expected precision from each other.

We used parameter values $k^{\max}=10$ and $\omega = 0.1$ for \ASS, as recommended. We always ran both adaptive and non-adaptive variant of \ASS and display 
the better results (mostly adaptive).
Based on our tests, we fixed $\kappa := 0.1$ for \GUS. 

\vspace*{-2mm}

\begin{figure}[t]
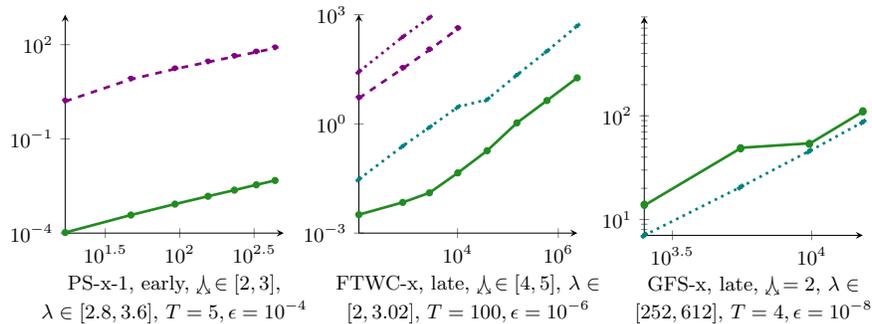

	\centering
	\begin{subfigure}[t]{\plotratiostate\textwidth}
		\centering
		\begin{tikzpicture}[scale=\plotscalestate]
		\input{figures/PollingSystem_states_early_jt1_t_5_e-04.tex}
		\end{tikzpicture}
		\label{fig:space:a}
	\end{subfigure}%
	\begin{subfigure}[t]{\plotratiostate\textwidth}
		\centering
		\begin{tikzpicture}[scale=\plotscalestate]
		\input{figures/FTWC_Late_States_t=100_e=1e-06.tex}
		\end{tikzpicture}
		\label{fig:space:c}
	\end{subfigure}%
	\begin{subfigure}[t]{\plotratiostate\textwidth}
		\centering
		\begin{tikzpicture}[scale=\plotscalestate]
		\input{figures/GFS_Late_States_t=4_e=1e-08.tex}
		\end{tikzpicture}
		\label{fig:space:d}
	\end{subfigure}%
	\vspace*{-5mm}
	\caption{Selected experiments: Increasing state space size.}
	\label{fig:space}
\vspace*{-12mm}
\end{figure}
\subsubsection{Empirical Results.}

In this section we present 
our empirical observations.
We consider early and late scheduling problems separately (because the encoding mentioned in  Remark~\ref{rem:transf} of Section~\ref{sec:ctmcp}, is exponential); only \GU can be directly run on both problems.
All experiments were run on a single core of Intel Core i7-4790 with 16GB of RAM, 
computing a total of about 2350 data points.
%


The memory requirements of all the considered algorithms do not deviate considerably and thus are not reported. This echoes that all space complexities are linear in the model size. 
%
%
We encountered no significant impact of additional 
dependencies of \PS
on a hidden model parameter 
(number of ``switching points'', coarsely bounded in~\cite{fearnley_et_al:LIPIcs:2011:3354}). 

In the following, we focus on the time requirements.
We first show plots of a few selected experiments that represent well our general observations. Later, we give a short summary of all experiments.
All plots presented below use logarithmic scale for the runtime
(in seconds). Some data points are missing as we applied a time limit of 15 minutes for every computation and also because the original implementation of \ES-2 cannot handle models with more than two actions per state.
We use symbol $\choices$ to denote the maximal number of action choices and $\lambda$ for the maximal exit rate. We use the symbol ``x'' whenever the varying
parameter is a part of the model name, e.g. PS-2-x.


\begin{figure}[t]
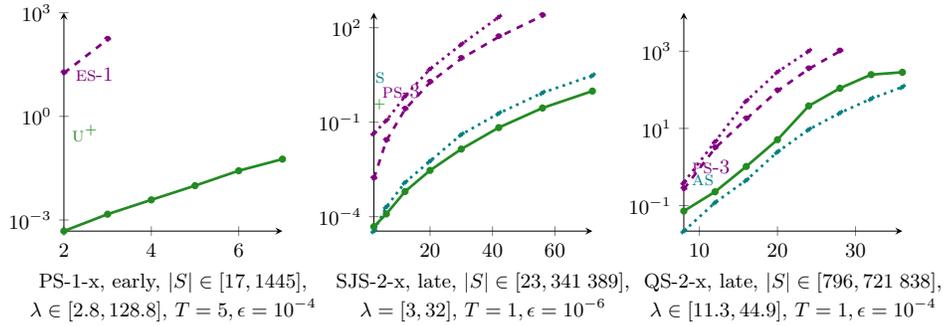

  \centering
  \begin{subfigure}[t]{\plotratio\textwidth} 
    \centering
    \begin{tikzpicture}[scale=\plotscale]
      \input{figures/PollingSystem_fanout_early_qs1_t_5_e-04.tex} 
    \end{tikzpicture}
    \label{fig:choice:a}
  \end{subfigure}%
  ~ 
  \begin{subfigure}[t]{\plotratio\textwidth} 
    \centering
    \begin{tikzpicture}[scale=\plotscale]
      \input{figures/SJS_Late_Fanout_t=1_e=1e-06.tex} 
    \end{tikzpicture}
    \label{fig:choice:b}
  \end{subfigure}%
  ~
  \begin{subfigure}[t]{\plotratio\textwidth} 
    \centering
    \begin{tikzpicture}[scale=\plotscale]
      \input{figures/QueuingSystem_Late_Fanout_t=1_e=1e-04.tex}
    \end{tikzpicture}
    \label{fig:choice:c}
  \end{subfigure}%
\vspace*{-5mm}
  \caption{Selected experiments: Increasing number of action choices.}
  \label{fig:choice}
\end{figure}

\begin{figure}[b]
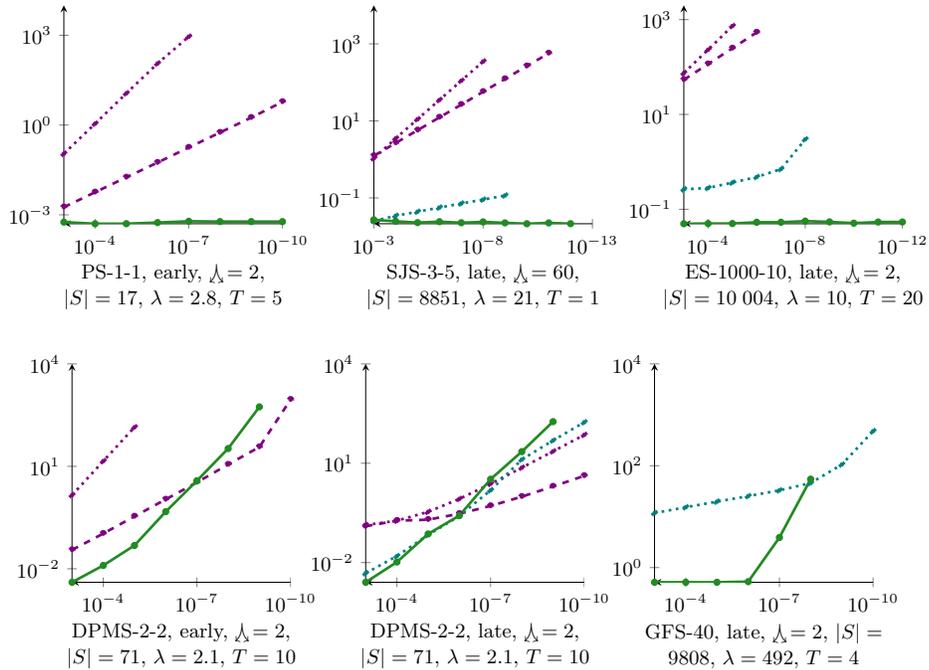

\vspace*{-3mm}
	\centering
	\begin{subfigure}[t]{\plotratio\textwidth}
		\centering
		\begin{tikzpicture}[scale=\plotscale]
		\input{figures/PollingSystem_error_early_jt1_qs1_t_5.tex}
		\end{tikzpicture}
		\label{fig:precision:a}
	\end{subfigure}%
	~ 
	\begin{subfigure}[t]{\plotratio\textwidth}
		\centering
		\begin{tikzpicture}[scale=\plotscale]
		\input{figures/SJS_Late_Precision_t=1_35.tex}
		\end{tikzpicture}
		\label{fig:precision:b}
	\end{subfigure}%
	~
	\begin{subfigure}[t]{\plotratio\textwidth}
		\centering
		\begin{tikzpicture}[scale=\plotscale]
		\input{figures/ErlangStages_Late_Precision_t=20.tex}
		\end{tikzpicture}
		\label{fig:precision:c}
	\end{subfigure}%
	\\
	\begin{subfigure}[t]{\plotratio\textwidth}
		\centering
		\begin{tikzpicture}[scale=\plotscale]
		\input{figures/DynamicPM_Early_Precision_t=10.tex}
		\end{tikzpicture}
		\label{fig:precision:d}
	\end{subfigure}%
	\begin{subfigure}[t]{\plotratio\textwidth}
		\centering
		\begin{tikzpicture}[scale=\plotscale]
		\input{figures/DynamicPM_Late_Precision_t=10.tex}
		\end{tikzpicture}
		\label{fig:precision:e}
	\end{subfigure}
	\begin{subfigure}[t]{\plotratio\textwidth}
		\centering
		\begin{tikzpicture}[scale=\plotscale]
		\input{figures/GFS_Late_Precision_t=4.tex}
		\end{tikzpicture}
		\label{fig:precision:f}
	\end{subfigure}
	\vspace*{-5mm}
	\caption{Selected experiments: Increasing precision.}
	\label{fig:precision}
\end{figure}

\begin{description}
	\item[State space.] In
	Figure~\ref{fig:space} we illustrate the effect of enlarging the state space. On the left there is a plot for early algorithms
	representing the general trend: 
\GU outperforms \ES-1 (as well as \ES-2 where applicable).
	For late algorithms in the plots on the right, the situation
	is more diverse, with \GU and \AS outperforming the \PS
	algorithms.  All algorithms exhibit similar dependency on the
	growth of the state space.

\item[Action choices.]
	Figure~\ref{fig:choice} displays the effect of increasing the
	number of actions to choose from.
For early schedulers (left) \GU generally dominates \ES-1.
For late schedulers, again \GU and \AS dominate \PS. Increasing the
	choice options in our models generally induces larger state
	spaces, so the observed growth is not to be attributed to the
	computational difficulty resulting from an increase in choice options alone.
	\item[Precision.] 	Figure~\ref{fig:precision} details precision dependency. 
Across all models, \GU works very well, excepts for some high
	precision cases, such as the DPMS models, where \ES-2 might
	be preferable over \GU in the early setting (bottom left), and
	similarly for \AS in the late setting (bottom middle). The
	same is true for the GFS case (bottom right). On the other
	hand, for some models (examples in the first row) \GU delivers very
	high precision without any runtime increase.  It is also
	interesting that generally the sensitivity of all algorithm to
	required precision is more than linear in the number of
	precision digits.
%
	\item[Time bound.]
Figure~\ref{fig:time} illustrates the effect of increasing the time bound.
	Again, the \GU-algorithm is the least sensitive in the early
	setting. For late scheduling, there are some notable QS instances
	where \PS-3 outperforms both \AS and \GU (bottom middle).
	Very large time bounds make sense only for a few models (bottom right, log-log-scale). Elsewhere, the values converge making it trivial for \ASS and \GUS.
\end{description}

\begin{figure}[t]
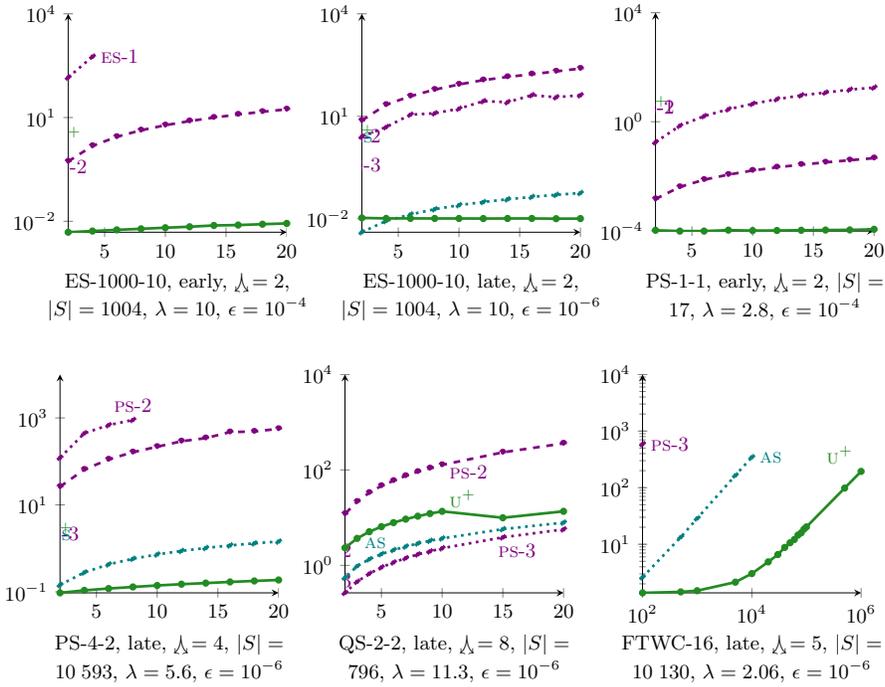

  \centering
  \begin{subfigure}[t]{\plotratio\textwidth}
    \centering
    \begin{tikzpicture}[scale=\plotscale]
      \input{figures/ErlangStages_Early_Time_e=1e-04.tex}
    \end{tikzpicture}
    \label{fig:time:a}
  \end{subfigure}%
  \begin{subfigure}[t]{\plotratio\textwidth}
    \centering
    \begin{tikzpicture}[scale=\plotscale]
      \input{figures/ErlangStages_Late_Time_e=1e-06.tex}
    \end{tikzpicture}
    \label{fig:time:b}
  \end{subfigure}%
  \begin{subfigure}[t]{\plotratio\textwidth}
    \centering
    \begin{tikzpicture}[scale=\plotscale]
      \input{figures/PollingSystem_time_early_jt1_qs1_e-04.tex}
    \end{tikzpicture}
    \label{fig:time:c}
  \end{subfigure}%
  \\
  \begin{subfigure}[t]{\plotratio\textwidth}
    \centering
    \begin{tikzpicture}[scale=\plotscale]
      \input{figures/PollingSystem_time_late_jt2_qs4_e-06.tex}
    \end{tikzpicture}
    \label{fig:time:d}
  \end{subfigure}%
  \begin{subfigure}[t]{\plotratio\textwidth}
    \centering
    \begin{tikzpicture}[scale=\plotscale]
       \input{figures/QueuingSystem_Late_Time_e=1e-06.tex}
    \end{tikzpicture}
    \label{fig:time:e}
  \end{subfigure}
  \begin{subfigure}[t]{\plotratio\textwidth}
    \centering
    \begin{tikzpicture}[scale=\plotscale]
       \input{figures/FTWC_Late_Time_e=1e-06.tex}
    \end{tikzpicture}
    \label{fig:time:f}
  \end{subfigure}
\vspace*{-5mm}
  \caption{Selected experiments: Increasing time bound.}
  \label{fig:time}
\vspace*{-7mm}
\end{figure}

\begin{wrapfigure}[7]{r}{0.25\textwidth}
  \centering
\vspace*{-7mm}
  \begin{tikzpicture}[xscale=0.7,yscale=0.7]		
	\begin{axis}[
		ymin=0,ymax=700,xmin=1, xmax=12,
		axis y line=left,axis x line=bottom,
		xlabel style={align=center},
		xlabel={},
		ylabel={},
		height=5cm,
		width=5cm,
		legend style={rounded corners,at={(.9,.9)}},
		grid style = {
			dash pattern = on 0.05mm off 1mm,
			line cap = round,
			black,
			line width = 0.5pt
		},
		cycle list name=latelist
	]
        	\addplot coordinates {
		 (1,900) 
	};

	\addplot coordinates {
		 (1,900) 
	};


	\addplot coordinates {
		 (1,10.0222) (2,21.1944) (3,33.0529) (4,45.6707) (5,58.6368) (6,71.8868) (7,85.4284) (8,99.3058) (9,113.414) (10,127.884) (25,349.183)
	} node [pos = 0.4, above]{\ASS};

	\addplot[unif] coordinates {
(1,23.9047) (2,9.97675) (3,0.411045) (4,54.3669) (5,31.1371) (5.1,13.5269) (5.2,5.44268) (5.3,2.02372) (5.4,5.88321) (5.5,79.1312) (5.6,81.0806) (5.7,6.23368) (5.8,0.743626) (5.9,0.757196) (6,0.764428) (6.1,2.43246) (6.2,17.4202) (6.3,41.9022) (6.4,94.238) (6.5,43.2158) (6.6,3.07111) (6.7,0.853101) (6.8,0.856467) (6.9,7.97852) (7,224.163) (7.1,48.0254) (7.2,231.724) (7.3,109.023) (7.4,21.7419) (7.5,8.74951) (7.6,3.56379) (7.7,9.09223) (7.8,532.279) (7.9,260.658) (8,547.055) (8.1,264.892) (8.2,125.555) (8.3,57.2064) (8.4,26.3822) (8.5,10.5032) (8.6,10.7711) (8.7,10.6565) (8.8,61.9596) (8.9,611.723) (9,139.577) (9.1,28.8813) (9.2,11.4076) (9.3,1.13958) (9.4,1.32851) (9.5,4.47242) (9.6,30.1785) (9.7,670.394) (9.8,676.28) (9.9,327.297) (10,156.133) (10.1,72.1047) (10.2,31.9896) (10.3,13.498) (10.4,5.29746) (10.5,1.3343) (10.6,1.35597) (10.7,1.51369) (10.8,1.43715) (10.9,1.44273) (11,1.72355) (11.1,1.59683) (11.2,1.58839) (11.3,1.82885) (11.4,1.54587) (11.5,1.54952) (11.6,1.55866) (11.7,1.58183) (11.8,1.59432) (11.9,1.62101) (12,1.6388) (12.1,1.79849) (12.2,1.81497) (12.3,1.7037) (12.4,1.73128) (12.5,1.74654) (12.6,1.80269) (12.7,1.77676) (12.8,2.03211) (12.9,1.82485) (13,1.8459) (13.1,1.86389) (13.2,1.88345) (13.3,1.90012) (13.4,1.91828) (13.5,1.94512) (13.6,1.95764) (13.7,1.97915) (13.8,1.99434) (13.9,2.01437) (14,2.046) (14.1,2.12216) (14.2,2.08416) (14.3,2.28318) (14.4,2.12783) (14.5,2.1395) (14.6,2.16081) (14.7,2.17822) (14.8,2.19738) (14.9,2.23218) (15,2.24743) (25,4.33769) (50,10.2865)

	} node [pos = 0.7, above left=15]{\GUS};

		\addplot (\x, {900}) node[above left]{timeout};		
	\end{axis}
  \end{tikzpicture}
\end{wrapfigure}
%
\noindent Among the many instances we considered we found a few instances where
the late \GU-algorithm shows surprising sensitivity to changes in time
bound, particularly for high precision scenarios. This is exemplified
on the right (GFS, late, $\choices = 2$, $|S| = 9808 $, $\lambda=492$,
$\epsilon=10^{-8}$, increasing time bound, no log scale). In line with the apparent general tendency of the
algorithms for
increasing parameter values, the work and thus time needed 
tends to increase
monotonously. 
Instead, small variations in time bound may lead to
great savings in runtime for \GU. This is rooted in the error
calculated while running the algorithm coincidentally falling into the
allowed margin. Less extreme examples of this behaviour are included in 
Figure~\ref{fig:precision} top row and Figure~\ref{fig:time} bottom middle.
We observed such time savings only for \GU, not for any other algorithm, 
though conceptually the runtime of \AS might profit from similar  effects as well.
The exact conditions of this behaviour are still to be found.
 
A complete list of model files, additional statistics, result tables
as well as all prototype implementations are available
at the following URL:\\ \centerline
{\url{http://depend.cs.uni-saarland.de/~hahate/atva15/}}.

\subsubsection{Evaluation and Discussion.}

The results presented show that a general
answer about the relative performance of the proposed algorithms is
not easy to give, but appears very much dependent on model parameters
outside the awareness of the modeller. Thus there is no clear winner
across all models. Still, our benchmarking, summarised in Table~\ref{tab:winner}, 
provides some general insights:
\begin{itemize}
\item All algorithms are naturally sensitive to increases in model
  parameters. Their 
runtime mostly behaves linear in the time
  bounds and the state space size, exponential in precision and superlinear (though still polynomial) in fanout.

\item For early schedulers \ES-1 is not competitive. \GU mostly outperforms \ES-2.
\item For late schedulers \PS-1 is not competitive and \PS-3 is
  effectively faster than \PS-2. \AS and \GU mostly outperform \PS-3. 
  Still each of the late algorithms $\{$\AS, \GU,
  \PS-3$\}$ is dominating the other two on at least one model
  instance. The particular algorithmic strengths have no obvious
  relation to model parameters available to the modeller. 
\item For low precision, \GU appears to be the preferred choice. For high precision, \AS is a more stable choice than \GU. 
 Yet its performance depends on non-obvious model particularities and
  algorithm parameters.
\end{itemize}

\begin{table}[t]
	\centering
{
	\begin{tabular}[t]{@{}r@{:\,}rcr@{}c@{}l@{$\!\!\!\!\!\!$}|l|r@{}l@{}}
\multicolumn{1}{c}{}		& 		\multicolumn{1}{c}{max.\ $|S|$} & 		\multicolumn{1}{c}{\parbox[c]{0.6cm}{\centering max.\\$\choices$}} & 
		\multicolumn{3}{c}{\parbox[c]{1.6cm}{\centering max.\ exit \\ rate range}}
		 & \multicolumn{1}{c}{\parbox[c]{2cm}{\centering best in early (\# of cases)}}
		  & \multicolumn{2}{c}{\parbox[c]{1.9cm}{\centering best in late (\# of cases)}}\\
		 \cline{2-9}
		{\scriptsize PS} & \numprint{743969} & \numprint{7} & \numprint{5.6}&--&\numprint{129.6}
		  & \GUS\!\textbf{(32)} & \GUS\!\textbf{(47)}\\
		{\scriptsize QS} & \numprint{16924} & \numprint{36} & \numprint{6.5}&--&\numprint{44.9}    
		& \GUS\!\textbf{(32)} &  \PSS-3\textbf{(18)}&, \GUS\!(17), \ASS(15)\\
		{\scriptsize DPMS} & \numprint{366148} & \numprint{7} &  \numprint{2.1}&--&\numprint{9.1}   
		& \GUS\!\textbf{(31)}, \ESS-2(3), \textsc{n/a}(1) & \ASS\textbf{(24)}&, \GUS\!(14), \PSS-3(6)\\
		{\scriptsize GFS} & \numprint{15258} & \numprint{2} & \numprint{252}&--&\numprint{612}
		& \GUS\!\textbf{(40)} & \ASS\textbf{(23)}&, \GUS\!(11) \\
		{\scriptsize FTWC} & \numprint{2373650} & \numprint{5} & \numprint{2}&--&\numprint{3.02}
		& \GUS\!\textbf{(25)} & \GUS\!\textbf{(32)}\\
		{\scriptsize SJS} & \numprint{18451} & \numprint{72} & \numprint{3}&--&\numprint{32}    
		& \GUS\!\textbf{(57)}, \ESS-2(2) & \GUS\!\textbf{(70)}&, \ASS(29)\\
		{\scriptsize ES} & \numprint{30004} & \numprint{2} & &\numprint{10}&   
		& \GUS\!\textbf{(23)}, \ESS-2(4), \textsc{n/a}(1)& \GUS\!\textbf{(28)}&, \PSS-3(2)\\
	\end{tabular}}

\vspace*{2mm}

	\caption{{\footnotesize Overview of experiments summarising which algorithm performed best how many times; \textsc{n/a} indicates that no algorithm completed within 15 minutes.}}
	\label{tab:winner}
\vspace*{-8mm}
\end{table}


\noindent
All in all, \GU is easy to implement for both early and late, and
competitive across a wide range of models. In settings where an
a posteriori error bound is enough, a good approximation can be usually obtained by
a variant of \GU that computes only the first iteration 
and does not increase the uniformisation rate (see the accompanying web for the error bounds obtained in experiments).

%
\section{Conclusion}

This paper has introduced \GU, a new and simple algorithm for
time-bounded reachability objectives in CTMDPs. We studied this and
all other published algorithms in an extensive comparative evaluation
for both early and late scheduling. In general, \GU performs very well
across the benchmarks, apart from late scheduling and high precision,
where it appears hard to 
predict which of the algorithms \GU, \AS,
\PS-3 performs best. One might consider to follow an approach
inspired by the distributed concurrent solver in
\textsc{Gurobi}~\cite{gurobi}. The idea is to launch all three
implementations to run concurrently on distinct cores
and report the result as soon as the first one terminates.

For researchers who want to extend an existing CTMC model checker to a
CTMDP model checker, the obvious choice is the \GU-algorithm: It
works right away for early and for late optimisation, and it requires only
a small change to the uniformisation subroutine used at the core
of CTMC model checking.
%

\begin{paragraph}{\bf Acknowledgements}
  We are grateful to Moritz Hahn (ISCAS Beijing), Dennis Guck (Universiteit Twente), and Markus
  Rabe (UC Berkeley) for discussions and technical
  contributions. 
 This work is supported
 by the EU 7th Framework Programme projects 295261
  (MEALS) and 318490 (SENSATION), by the Czech Science Foundation project P202/12/G061, the DFG Transregional Collaborative
  Research Centre SFB/TR 14 AVACS, and by the CDZ project 1023 (CAP).
\end{paragraph}

\bibliographystyle{splncs03}
\bibliography{bibliography}

\appendix
\newpage
\section{Proofs from Section~\ref{sec:our}}

We first prove the following auxiliary lemma characterizing the functions that are used to approximate the lower and upper bounds.

\begin{lemma}\label{lem:charact}
	For every $0 \leqslant k < N$ and $s\in S$, we have
	\begin{align*}
	\unvalVar[k]{s} &= \sup_{\sigma \in \TA}
	\sum_{i=k}^{N-1}  \Pr[\sinit]{\sigma}{\reach[=i]{T}{G} \mid \at{k}{s}}  \\
	\ovvalVar[k]{s} &= \sum_{i=k}^{N-1}
	\sup_{\sigma \in \TA} \Pr[\sinit]{\sigma}{\reach[=i]{T}{G} \mid \at{k}{s}}
	= \sum_{i=k}^{N-1}	
	\sup_{\sigma \in \GM_{\x}} \Pr[\sinit]{\sigma}{\reach[=i]{T}{G} \mid \at{k}{s}} \\
	\ovvalprimeVar[k]{s} &= \sup_{\sigma \in \TA} \Pr[\sinit]{\sigma}{\reach[\leqslant N-1]{}{G} \mid \at{k}{s}} 
	        = \sup_{\sigma \in \GM_{\x}} \Pr[\sinit]{\sigma}{\reach[\leqslant N-1]{}{G} \mid \at{k}{s}} 
	\end{align*}
	where $\sinit$ is the initial state, $\at{k}{s}$ are the runs that visit $s$ after $k$ steps and do not reach $G$ before $k$ steps, and $\reach[\leqslant N-1]{}{G}$ are the runs that reach $G$ within $N-1$ steps taken in arbitrary time.
\end{lemma}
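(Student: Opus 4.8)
The plan is to prove all three identities by (backward) induction on $k$, going from $k = N-1$ down to $k = 0$, with the truncated-horizon functions $\unvalVar[k]{}$, $\ovvalprimeVar[k]{}$, $\ovvalVar[k]{}$ each satisfying a Bellman-type recursion in $k$. First I would set up the conditional probability space: given $\at{k}{s}$ (the event of being in state $s$ after exactly $k$ steps without having hit $G$ earlier), the post-$k$ behaviour of a run is governed again by the uniform CTMDP $\C_\lambda^\x$ started from $s$, and for an \emph{untimed} scheduler the continuation decisions depend only on the sequence of states seen so far — crucially, in a \emph{uniformised} model the number of steps and the elapsed time are stochastically decoupled: the $k$-th jump time is $\mathrm{Erlang}(k,\lambda)$ independently of the embedded state chain. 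This is exactly what lets the $\pois$-weights be pulled out of the state recursion.

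For $\ovvalprimeVar[k]{}$ I would argue that $\sup_{\sigma}\Pr[\sinit]{\sigma}{\reach[\leqslant N-1]{}{G}\mid \at{k}{s}}$ is the optimal probability, for the \emph{embedded discrete-time MDP} of $\C_\lambda^\x$, of reaching $G$ within $N-1-k$ further steps from $s$; this is the standard finite-horizon value iteration, so it equals $1$ if $s\in G$, $0$ if $k=N$ (no steps left), and $\max_\acta\sum_{s'}\bfP^\x_\lambda(s,\acta,s')\,\ovvalprimeVar[k+1]{s'}$ otherwise — matching the defining recursion. The equality of the $\sup$ over $\TA$ and over $\GM_\x$ here follows because the event $\reach[\leqslant N-1]{}{G}$ is purely combinatorial (step-bounded, time-unbounded), so timing information is useless and an optimal memoryless untimed (in fact Markovian) scheduler exists. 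For $\unvalVar[k]{}$, the key observation is $\reach[=i]{T}{G}\cap\at{k}{s}$ decomposes: conditioned on $\at{k}{s}$, reaching $G$ in exactly $i$ total steps before $T$ means (a) from $s$ the embedded chain takes a path of length $i-k$ into $G$ with no earlier visit, and (b) the $i$-th jump occurs before $T$, which has probability $\sum_{j=i}^{?}\pois(j)$-type mass; since a single untimed $\sigma$ must be used simultaneously for all $i\geq k$, summing over $i$ and using that the $i$-th-jump-before-$T$ probability is $\sum_{m\ge i}\pois(m)$ yields, after reindexing and the induction hypothesis, the stated $\unvalVar[k]{}$ recursion (with the $s\in G$ case giving $\sum_{i=k}^{N-1}\pois(i)$ as the probability of at least $k$ and at most $N-1$ steps, truncated at $N$). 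For $\ovvalVar[k]{}$, the supremum is inside the sum, so for each $i$ separately we may use the optimal scheduler; by the $\ovvalprimeVar{}$ identity, $\sup_\sigma\Pr{\reach[=i]{T}{G}\mid\at{k}{s}}$ factors as $\pois(i)$ (roughly; more precisely a telescoped Poisson weight) times $\ovvalprimeVar[(N-1)-(i-k)]{s}$, the horizon being $i-k$ embedded steps — giving exactly $\ovvalVar[k]{s}=\sum_{i=k}^{N-1}\pois(i)\,\ovvalprimeVar[(N-1)-(i-k)]{s}$. The second equality ($\TA$ versus $\GM_\x$) again uses that for a \emph{fixed} step count $i$, timing is irrelevant.

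The main obstacle I expect is bookkeeping the Poisson/Erlang weights correctly under truncation at $N$: the clean statements ``$i$-th jump before $T$ has probability $\sum_{m\ge i}\pois(m)$'' must be reconciled with the fact that the algorithm truncates the outer sum at $N-1$ and sets everything to $0$ at $k=N$, so the identities as stated hold for the \emph{truncated} quantities $\unvalVar[k]{}$, $\ovvalVar[k]{}$ rather than the true infinite-horizon $\unval{},\ovval{}$ (the gap between them being controlled separately in Lemma~\ref{lem:approximations}). I would therefore be careful to phrase the conditional-probability identities with the horizon cutoff baked in — e.g.\ $\reach[\leqslant N-1]{}{G}$ and the finite sums $\sum_{i=k}^{N-1}$ — and to verify the base case $k=N-1$ explicitly (one step remaining) before invoking the recursion, and also to justify the interchange of supremum and the finite sum in the $\ovvalVar{}$ case by noting the sum is finite so no measure-theoretic subtlety arises beyond measurability of $\sigma$, already assumed.
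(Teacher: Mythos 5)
Your plan is correct and follows essentially the same route as the paper's proof: backward induction on $k$ with goal states treated separately, the interchange of $\max_\acta$ with $\sup_{\sigma\in\TA}$ and with the finite sum over $i$ for $\unvalVar[k]{}$ and $\ovvalprimeVar[k]{}$, and for $\ovvalVar[k]{}$ the factorization of the conditional probability of $\reach[=i]{T}{G}$ given $\at{k}{s}$ into $\pois(i)$ times the step-bounded quantity $\ovvalprimeVar[(N-1)-(i-k)]{s}$, with the $\TA$-versus-$\GM_\x$ equalities coming from the time-irrelevance of step-bounded events. The only slip is in your $\unvalVar[k]{}$ bookkeeping: the weight attached to ``exactly $i$ steps by $T$'' is the point mass $\pois(i)$ (so a goal state collects $\sum_{i=k}^{N-1}\pois(i)$), not the tail sum $\sum_{m\geq i}\pois(m)$, but this does not affect the structure of the argument.
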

\begin{proof}
	Let us first assume $s\in G$. We have $\ovvalVar[k]{s} = \unvalVar[k]{s} = \sum_{i=k}^{N-1} \pois(i)$ and $\ovvalprimeVar[k]{s} = 1$ which is also equal to the right hand sides of the equalities above. Next, we prove the equalities for $s\not\in G$ by induction. First, the first and only summand in the right hand sides above equals to $0$ while also $\ovvalVar[N-1]{s} = \unvalVar[N-1]{s} = 0$. Next, let $k < N-1$ assuming the equalities above for $k+1$. 
	\begin{align*}
	\unvalVar[k]{s} 
	= & \max_\acta \sum_{s'} \bfP^\x_\lambda(s,\acta,s') \cdot \unvalVar[k+1]{s'} \\
	= & \max_\acta \sum_{s'} \bfP^\x_\lambda(s,\acta,s') \sup_{\sigma\in\TA} \sum_{i=k+1}^{N-1} \Pr[\sinit]{\sigma}{\reach[=i]{T}{G} \mid \at{k+1}{s'}} \\
	= & \sup_{\acta,\sigma\in\TA} \sum_{i=k+1}^{N-1} \sum_{s'} \bfP^\x_\lambda(s,\acta,s') \Pr[\sinit]{\sigma}{\reach[=i]{T}{G} \mid \at{k+1}{s'}} \\
	= & \sup_{\sigma\in\TA} \sum_{i=k+1}^{N-1} \Pr[\sinit]{\sigma}{\reach[=i]{T}{G} \mid \at{k}{s}} \\
	= & \sup_{\sigma\in\TA} \sum_{i=k}^{N-1} \Pr[\sinit]{\sigma}{\reach[=i]{T}{G} \mid \at{k}{s}}
	\end{align*}
	since the first summand equals to zero.
	Similarly for $\ovvalprimeVar[k]{s}$, we have
	\begin{align*}
	\ovvalprimeVar[k]{s} 
	= & \max_{\acta} \sum_{s'} \bfP^\x_\lambda(s,\acta,s')  \ovvalprimeVar[k+1]{s'} \\
	= & \max_{\acta} \sum_{s'} \bfP^\x_\lambda(s,\acta,s')  \sup_{\sigma \in \TA} \Pr[\sinit]{\sigma}{\reach[\leqslant N-1]{}{G} \mid \at{k+1}{s'}} \\
	= & \sup_{\sigma \in \TA} \Pr[\sinit]{\sigma}{\reach[\leqslant N-1]{}{G} \mid \at{k}{s}}
	\end{align*}
	and the same hold when ranging over schedulers in $\GM_{\x}$.
	Finally, for $\ovvalVar[k]{s}$, 
	\begin{align*}
	\ovvalVar[k]{s}
	=& 
	\sum_{i=k}^{N-1} \pois(i) \cdot \ovvalprimeVar[(N-1)-(i-k)]{s} \\
	=& 
	\sum_{i=k}^{N-1} \pois(i) \cdot \sup_{\sigma \in \TA} \Pr[\sinit]{\sigma}{\reach[\leqslant N-1]{}{G} \mid \at{(N-1) - i + k}{s}} \\
	=& 
	\sum_{i=k}^{N-1} \pois(i) \cdot \sup_{\sigma \in \TA} \Pr[\sinit]{\sigma}{\reach[\leqslant i]{}{G} \mid \at{k}{s}} \\		
	=& 
	\sum_{i=k}^{N-1} \sup_{\sigma \in \TA} \Pr[\sinit]{\sigma}{\reach[=i]{T}{G} \mid \at{k}{s}}
	\end{align*}
	and again the same hold when ranging over schedulers in $\GM_{\x}$.\qed
\end{proof}

\begin{reflemma}{lem:lower-upper}
	It holds that $\vale{}\leq\vall{}$, and for any CTMDP $\C^\x_\lambda$, $\unval{}\leq\val{}\leq\ovval{}$.
\end{reflemma}
\begin{proof}
	$\vale{}\leq\vall{}$ follows directly from the fact that $\GM_\early \subseteq \GM_\late$.
	$\unval{}\leq\ovval{}$ since 
	
	$$\sup_{\sigma \in \GM_\x}
	\sum_{i=0}^\infty  \Pr[s]{\sigma}{\reach[=i]{T}{G}}
	\leqslant 
	\sum_{i=0}^\infty 
	\sup_{\sigma \in \GM_{\x}} \Pr[s]{\sigma}{\reach[=i]{T}{G}}$$ (optimizing for each subset separately yields higher value). \\
Furthermore $\TA \subseteq \GM_\early$ implies $\unval{}\leq\vale{}$, and for each $i \in \Nseto$, it follows from Lemma~\ref{lem:charact} that
	$$\sup_{\sigma \in \GM_{\x}} \Pr[s]{\sigma}{\reach[=i]{T}{G}} = \sup_{\sigma \in \TA} \Pr[s]{\sigma}{\reach[=i]{T}{G}}.$$ 
\end{proof}

\begin{reflemma}{lem:approximations}
In any CTMDP $\C^\x_\lambda$, $\lVert \unvalVar[0]{} - \unval{} \rVert_\infty \leq \eps \cdot \kappa$ and $\lVert \ovvalVar[0]{} - \ovval{} \rVert_\infty \leq \eps \cdot \kappa$.
\end{reflemma}

\begin{proof}
	The proof follows from Lemma~\ref{lem:charact}. It is completed by the observation~\cite{DBLP:journals/iandc/BrazdilFKKK13} that the probability of $\geqslant N$ steps to be taken within $T$ is $\leqslant \eps \cdot \kappa$. This is independent of the scheduler and hence, for both $\unvalVar[0]{}$ and $\ovvalVar[0]{}$ we obtain the desired error bound. \qed
\end{proof}

\begin{reflemma}{lem:converge}
	We have $\displaystyle \lim_{\lambda \to \infty} g_\lambda \to 0$ where 
	$g_\lambda$ denotes the gap $\lVert \unval{} - \ovval{} \rVert_\infty$ in $\C^\x_\lambda$.	
\end{reflemma}

\begin{proof}
	From Lemmata~\ref{lem:lower-upper} and~\ref{lemma:unif-preserves} we have that for any $\lambda$, $\unval{} \leqslant \val{} \leqslant \ovval{}$. It remains to show that for any state $s \in S$, we have
	\begin{align}\label{eq:goal}
	\lim_{\lambda \to \infty} \lvert \unval{s} - \ovval{s} \rvert \to 0.
	\end{align}
	Let $\lambda_0$ be the maximal exit rate in $\C$ and let $\eps > 0$. We need to find a uniformisation rate $\lambda = k\lambda_0$ such that $\lvert \unval{s} - \ovval{s} \rvert  \leqslant \eps$.
	Consider Chebyshev inequality
	 \begin{align*}
		 Pr[|\psi_{\lambda T} - \lambda T| \geqslant m \sigma] \leqslant \frac{1}{m^2}
	 \end{align*}
	where $\sigma$ is standard deviation of $\psi_{\lambda T}$ and $m > 0$. Let $\frac{1}{m^2} = \frac{\varepsilon}{6}$. Then Chebyshev inequality for $\psi_{\lambda T}$ can be written as
	 \begin{align*}
		 Pr[|\psi_{\lambda T} - \lambda T| \geqslant \sqrt{\frac{6 \lambda T}{\varepsilon}}] \leqslant \frac{\varepsilon}{6}
	 \end{align*}
	or
	 \begin{align*}
		 Pr[\psi_{\lambda T} \in (\lambda T - \sqrt{\frac{6 \lambda T}{\varepsilon}}, \lambda T + \sqrt{\frac{6 \lambda T}{\varepsilon}}) ] > 1-\frac{\varepsilon}{6}
	 \end{align*}	
	
	
	For any uniformisation rate $\lambda = k \cdot \lambda_0$, we define $a_\lambda = \lfloor \lambda T -\sqrt{6\lambda T / \eps} \rfloor$ and $b_\lambda = \lceil \lambda T + \sqrt{6\lambda T / \eps} +1\rceil$.
	Then,
	\begin{align}\label{eq:bounds}
	\sum_{i=a_\lambda}^{b_\lambda-1} \pois(i) > 1- \frac{\eps}{6}.
	\end{align}
	In the uniformised model $\C^\x_{k \cdot \lambda_0}$, the probability of not changing state in one step is $$\frac{\lambda-E(s, \alpha)}{\lambda} \geqslant \frac{\lambda-\lambda_0}{\lambda} = \frac{k-1}{k}$$
	Therefore, for $c_\lambda = b_\lambda - a_\lambda \leqslant 2 (\sqrt{6\lambda T / \eps}+1)$, the probability $p_\lambda$ of not changing state at all within $c_\lambda$ steps satisfies 
\begin{align*}
	p_\lambda	&= \left(\frac{k-1}{k}\right)^{c_\lambda} \geqslant \left(\frac{k-1}{k}\right)^{2 (\sqrt{6k\lambda_0 T / \eps}+1)} = \\
				&= \left[\left(\frac{k-1}{k}\right)^{\sqrt{k}}\right]^{2 (\sqrt{6\lambda_0 T / \eps})}\left(\frac{k-1}{k}\right)^{2} = \\
				&= \left[\left(1-\frac{1}{\sqrt{k}}\right)^{\sqrt{k}} \left(1+\frac{1}{\sqrt{k}}\right)^{\sqrt{k}}\right]^{2 (\sqrt{6\lambda_0 T / \eps})}\left(\frac{k-1}{k}\right)^{2}
\end{align*}
Thus
	\begin{align}\label{eq:converge}
		\lim \limits _{k \to \infty} p_\lambda	&= \left[\frac{1}{e} \cdot e \right]^{2 (\sqrt{6\lambda_0 T / \eps})} = 1
	\end{align}
	Instead of (\ref{eq:goal}) we prove that there is $\lambda$ such that
		$$
		\lvert  \ovvalVarb^{b_\lambda}_0(s) - \unvalVarb^{b_\lambda}_0(s) \rvert \leqslant \eps/2.
		$$
	where $\unvalVarb^{b_\lambda}_i(s)$, $\ovvalprimeVarb^{b_\lambda}_i(s)$, and $\ovvalVarb^{b_\lambda}_i(s)$ is defined for all $k$ as $\unvalVarb$, $\ovvalprimeVarb$, and $\ovvalVarb$, only replacing $N$ by $b_\lambda$. This suffices, as from proof of Lemma~\ref{lem:approximations}, the approximations with upper bound $b_\lambda$ are only more precise than approximations with upper bound $N$. Using (\ref{eq:converge}), we fix $\lambda$ to be such that $p_{\lambda} \geqslant 1 - \eps/6$.
	Then, we have
	\begin{align*}
	\unvalVarb^{b_\lambda}_0(s) 
	& \geqslant \unvalVarb^{b_\lambda}_{a_{\lambda}}(s) 
	\intertext{and from (\ref{eq:bounds}), we can obtain by straightforward induction}
	& \geqslant \underline{u}^{b_\lambda}_{a_{\lambda}}(s) - \frac{\eps}{6}
	\intertext{where $\underline{u}_k$ is defined as $\unvalVar[k]{}$ except for goal states having value $1$ instead of the sum of poisson probabilities. The term $\underline{u}^{b_\lambda}_{a_{\lambda}}(s)$ now equals by definition the term $\ovvalprimeVarb^{b_{\lambda}}_{(b_{\lambda}-1)-a_\lambda}(s)$, and hence}
	& = \ovvalprimeVarb^{b_{\lambda}}_{(b_{\lambda}-1)-a_\lambda}(s) - \frac{\eps}{6}
	\; \geqslant \; \sum_{i=a_\lambda}^{b_\lambda-1} \pois(i) \cdot \ovvalprimeVarb^{b_{\lambda}}_{(b_{\lambda}-1)-a_\lambda}(s) - \frac{\eps}{6};
	\intertext{Furthermore, from the fact that $p_\lambda \geqslant 1- \eps/6$, we obtain that each $\ovvalprimeVarb^{b_{\lambda}}_{(b_{\lambda}-1)-i}(s) \leqslant \ovvalprimeVarb^{b_{\lambda}}_{(b_{\lambda}-1)-a_\lambda}(s) - \eps/6$ and}
	& \geqslant  \sum_{i=a_\lambda}^{b_\lambda-1} \pois(i) \cdot \ovvalprimeVarb^{b_{\lambda}}_{(b_{\lambda}-1)-i}(s) - \frac{2\eps}{6} 
	\geqslant \sum_{i=0}^{b_\lambda-1} \pois(i) \cdot \ovvalprimeVarb^{b_{\lambda}}_{(b_{\lambda}-1)-i}(s) - \frac{3\eps}{6} 
	\intertext{and finally, we get by definition}
	& = \ovvalVarb^{b_\lambda}_0(s) - \eps/2.
	\end{align*}
	\qed
\end{proof}

\noindent Let $\mathcal{D}(A)$ denote the set of all distributions over a discrete set $A$. Then

\begin{definition} A stochastic update scheduler $\sigma$ on a CTMDP $\C=(S,\Act,\bfR)$ is a tuple $\sigma = (\mathcal{M}, \sigma_u, \pi_0)$, where
\begin{itemize}
	\item[--] $\mathcal{M}$ is a countable set of memory elements
	\item[--] $\sigma_u:\mathcal{M} \times S \times \mathbb{R}_{\geqslant 0} \mapsto \mathcal{D}(\mathcal{M}, \mathit{Act})$ is the update function
	\item[--] $\pi_0:S \mapsto \mathcal{D}(\mathcal{M})$ distribution over initial memory values
\end{itemize}
\end{definition}

The system operates under a stochastic update scheduler as follows. At first initial memory values are sampled from the distribution $\pi_0(s)$. Afterwards, given current memory value, current state and time spent in the state so far (not the time from the beginning of the process), the stochastic update function $\sigma_u$ continuously updates the memory value and the action to be taken. When the system decides to leave the state, upon entering the successor state the memory is also updated by one.

\begin{reflemma}{lemma:stoch-update}
The values $(\unvalVar[k]{})_{0\leq k \leq N}$ computed by Algorithm~\ref{alg:unif} for given $\C$, $\x$, and $\eps > 0$ yield a stochastic update scheduler $\widetilde{\sigma}^{\x}_{TD}$ that is $\varepsilon$-optimal in $\C$.
\end{reflemma}
\begin{proof}

\noindent Let $\C=(S, \Act,\bfR)$ be the original CTMDP, $\lambda$ - the uniformization rate computed by Algorithm~\ref{alg:unif}, $\C^\x_\lambda=(S_{\lambda}, \Act,\bfR_{\lambda})$ - CTMDP uniformised with rate $\lambda$. Computation of the lower bound $\unvalVar[0]{}$ involves as well computation of the $\varepsilon$-optimal scheduler that attains the bound.  Let $\widetilde{\sigma}_{\TA}^{\x}: S_{\lambda} \times \mathbb{N_0}_{\geqslant 0} \mapsto \mathit{Act}$ be this scheduler and $i\C_{\TA}^{\x}=(iS_{\TA}, i\bfR_{\TA}^{\x})$ - the CTMC induced by $\widetilde{\sigma}_{\TA}^{\x}$, where $iS_{\TA} = S_{\lambda} \times \mathbb{N}_0$. Then,

$$i\bfR_{\TA}(s_1, s_2) = 
					\left\{
						\begin{array}{cll}
							\bfR(s', \widetilde{\sigma}_{\TA}^{\x}(s', m), s'')		&\mbox{ if } s_1 = (s', m) \text{ and } s_2=(s'', m+1) \\ \\
							\lambda-E(s', \widetilde{\sigma}_{\TA}^{\x}(s', m)) & \mbox{ if } s_1 = (s', m) \text{ and } s_2=(s', m+1) \\ \\
							0 & \mbox{ otherwise }\\
						\end{array}
					\right.\\
$$

Let $\pi_{(s, m)}(s', m+k, t)$ be the transient probability in $i\C_{\TA}^{\x}$ for state $(s, m+k)$, given that the system starts from state $(s, m)$. Then

$$\pi_{(s, m)}(s', m+k, t) = 
					\left\{
						\begin{array}{lll}
							e^{-\lambda t} \frac{(\lambda-E_0) \cdots (\lambda-E_{k-2}) (\lambda-E_{k-1}) }{k!}t^k		&\mbox{ if } s' = s\\
							\\
							e^{-\lambda t} \frac{(\lambda-E_0) \cdots (\lambda-E_{k-2}) \bfR(s, \alpha_{k-1}, s') }{k!}t^k & \mbox{ otherwise } \\
						\end{array}
					\right.\\
$$
where $\alpha_i=\widetilde{\sigma}_{\TA}^{\x}(s, m+i))$ and $E_i=E(s, \alpha_i)$. 

W.l.o.g. we assume that after $N$ transitions have been performed the scheduler $\widetilde{\sigma}_{\TA}^{\x}$ takes the same decision for every state, irrespectively of the memory value. We denote this decision as $\alpha_{N}(s)$. We now define the finite memory stochastic update scheduler $\widetilde{\sigma}_{\GM} = (\mathcal{M}, \sigma_u, \pi_0)$:
\begin{itemize}
	\item[--] $\mathcal{M} = [0..N] \cup \bot$
	\item[--] 
			$			
			\forall m, m' \in \mathcal{M}, m, \neq \bot, \ s \in S, \ t \in \mathbb{R}_{\geqslant 0} \\ \\
				\sigma_u^{\x}(\bot, s, t) := \left[(\bot, \alpha_N(s)) \mapsto 1, \mathit{otherwise} \mapsto 0\right] \\ \\
				\sigma_u^{\x}(m, s, t)(m', a) := 
					\left\{
						\begin{array}{lll}
							\pi_{(s, m)}(s, m', t)
								& \mbox{if } m' \in [m+1 .. N] \text{ and } \\
									&a=\widetilde{\sigma}_{\TA}^{\x}(s, m^{\x}), \text{where }\\
									&m^{\early}=m, m^{\late}=m' \\ \\
							\sum \limits_{k = 1}^{\infty} \pi_{(s, m)}(s, N+k, t)
								& \mbox{if } m' = \bot \text{ and } \\
									&\x = \early \text{ and } a = \widetilde{\sigma}_{\TA}^{\early}(s, m) \text{ or } \\
									&\x = \late \text{ and } a = \alpha_N(s)	\\ \\
							0		& \mbox{otherwise} \\
						\end{array}
					\right.\\ \\
				$
	\item[--] $\forall s \in S \ \pi_0(s):=\left[ 0 \mapsto 1, \text{otherwise} \mapsto 0 \right]$
\end{itemize}

Intuitively, when the system moves to a state $s$ and memory $m$ has been collected up until this moment, it updates the memory according to the sub-process $\C^\x_\lambda(s, m)$ of $\C^\x_\lambda$ while residing in $s$ and the memory update is finished when $\C$ decides to leave $s$. This sub-process $\C^\x_\lambda(s, m)$ is a process that starts when $\C^\x_\lambda$ moves to the state $s$ and evolves when the uniformized system takes introduced high-rate transitions. Thus, the value of the memory is equivalent to the length of the history of the uniformized process, i.e. the $\widetilde{\sigma}_{\GM}$ simulates evolution of the uniformized process and takes exactly the decisions that $\widetilde{\sigma}_{\TA}$ would take. Thus, the transient distribution of the processes induced by $\widetilde{\sigma}_{\GM}$ and $\widetilde{\sigma}_{\TA}$ are exactly the same.

The amount of memory used by $\widetilde{\sigma}_{\GM}$ is in $O(N |S|)$.

\end{proof}

\end{document}